\documentclass{article}
\usepackage{amsmath,amssymb,theorem}
\usepackage{graphicx}
\usepackage{verbatim}

\usepackage{color}
\usepackage{makeidx}


\theorembodyfont{\upshape}

\setlength{\textwidth}{27pc}
\setlength{\textheight}{43pc}





 \makeatletter
 \@addtoreset{equation}{section}
 \makeatother

 \newtheorem{ittheorem}{Theorem}
 \newtheorem{itlemma}{Lemma}
 \newtheorem{itproposition}{Proposition}
 \newtheorem{itdefinition}{Definition}

 \newtheorem{itremark}{Remark}
 \newtheorem{itclaim}{Claim}
 \newtheorem{itcorollary}{\bf Corollary}

 \newenvironment{theorem}{\addtocounter{equation}{1}
 \begin{ittheorem}}{\end{ittheorem}}

 \newenvironment{lemma}{\addtocounter{equation}{1}
 \begin{itlemma}}{\end{itlemma}}

 \newenvironment{proposition}{\addtocounter{equation}{1}
 \begin{itproposition}}{\end{itproposition}}

 \newenvironment{definition}{\addtocounter{equation}{1}
 \begin{itdefinition}}{\end{itdefinition}}

 \newenvironment{remark}{\addtocounter{equation}{1}
 \begin{itremark}}{\end{itremark}}

 \newenvironment{claim}{\addtocounter{equation}{1}
 \begin{itclaim}}{\end{itclaim}}

 \newenvironment{proof}{\noindent {\bf Proof.\,}
 }{\hspace*{\fill}$\qed$\medskip}

 \newenvironment{corollary}{\addtocounter{equation}{1}
 \begin{itcorollary}}{\end{itcorollary}}

 \newcommand{\be}[1]{\begin{eqnarray*}\label{#1}}
 \newcommand{\ee}{\end{eqnarray*}}

 \newcommand{\bl}[1]{\begin{lemma}\label{#1}}
 \newcommand{\el}{\end{lemma}}

 \newcommand{\br}[1]{\begin{remark}\label{#1}}
 \newcommand{\er}{\end{remark}}

 \newcommand{\bt}[1]{\begin{theorem}\label{#1}}
 \newcommand{\et}{\end{theorem}}

 \newcommand{\bd}[1]{\begin{definition}\label{#1}}
 \newcommand{\ed}{\end{definition}}

 \newcommand{\bcl}[1]{\begin{claim}\label{#1}}
 \newcommand{\ecl}{\end{claim}}

 \newcommand{\bp}[1]{\begin{proposition}\label{#1}}
 \newcommand{\ep}{\end{proposition}}

 \newcommand{\bc}[1]{\begin{corollary}\label{#1}}
 \newcommand{\ec}{\end{corollary}}

 \newcommand{\bpr}{\begin{proof}}
 \newcommand{\epr}{\end{proof}}

 \newcommand{\bi}{\begin{itemize}}
 \newcommand{\ei}{\end{itemize}}

 \newcommand{\ben}{\begin{enumerate}}
 \newcommand{\een}{\end{enumerate}}


\def\zl{\{\,0,\dots,\ell\,\}}

\def\uro{\smash{{U}^{\!\!\!\!\raise5pt\hbox{$\scriptstyle o$}}}}

\def\bp{{\overline{p}}}

\def\bp{{\overline{p}}}
\def\exa{\exp(-a)}
\def\exa{e^{-a}}

 \def \ba {\begin{array}}
 \def \ea {\end{array}}

 \def \qed {{\heartsuit\hfill}}
 
 \def \R {{\mathbb R}}
 
 \def \N {{\mathbb N}}

 \def \cS {{\cal S}}

\def\fs{\mathfrak S} 
\def\wE{\smash{\widetilde E}}
\def\wP{\smash{\widetilde P}}

 \def \cC {{\cal C}}
 
 \def \cQ {{\cal Q}}

\def \qed {{\square\hfill}}

 \def \s {{\sigma}}

\newcommand{\updown}[2]{\genfrac{\lbrace}{\rbrace}{0pt}{}{#1}{#2}}



%
  \def\cC{{\cal C}} 
   \def\cH{{\cal H}}

\def\cQ{{\cal Q}}  \def\cS{{\cal S}}


\def \qed {{\square\hfill}}

\def\R{{\mathbb R}}

\def\N{{\mathbb N}}


\def\eqref#1{(\ref{#1})}



\makeindex

\begin{document}
\allowdisplaybreaks[4]

\title{The quasispecies distribution}

 \author{
Rapha\"el Cerf and Joseba Dalmau
\\
DMA, 
{\'E}cole Normale Sup\'erieure\\
}

\maketitle



\begin{abstract}
\noindent
The quasispecies model was introduced in 1971 by Manfred Eigen to 
discuss
the first stages of life on Earth.
It provides an appealing mathematical framework to study the evolution
of populations in biology, for instance viruses.
We present briefly the model and we focus on its stationary solutions.
These formulae have a surprisingly rich combinatorial structure,
involving for instance the Eulerian and Stirling numbers,
as well as the up--down coefficients of permutations.
\end{abstract}



\section{Introduction}
The very concept of quasispecies is actively debated in theoretical biology. 
Loosely speaking,
a quasispecies is a group of individuals which are closely related to each other. 
At the genetic level, it is a model for a cloud of mutants around a well fitted
genotype, called the wild type or the master sequence.
Some biologists
argue that natural evolution operates on quasispecies rather than on single individuals.
Ideas coming from the quasispecies theory have been successfully applied to model
populations of viruses. 
Viruses 
have simple genomes
which can be analyzed with modern sequencing techniques.
Moreover they mutate very fast, thereby giving rise to complex quasispecies.
Some
medical strategies to prevent the development of viruses, like the HIV virus, are based
on the quasispecies model.
It is therefore crucial to improve our mathematical understanding of the
quasispecies model, in order to derive quantitative results which can be
confronted with experimental data.
In this text, we shall present briefly the quasispecies model of Eigen and
we shall study its stationary solutions. In doing so, we will embark on an
enriching journey around a wealth of mathematical tools:
Perron Frobenius theory, 
the polylogarithm or Jonqui\`ere's function,
Eulerian and Stirling numbers,
the up--down coefficients of permutations, the Poisson random walk
and traps on random permutations.

\section{The quasispecies model}


Manfred Eigen introduced the quasispecies model
in his celebrated article from 1971
about the first stages of life on Earth~\cite{EI1}.
Most presumably, the first living creatures were long macromolecules.
Eigen suggested that, at the macroscopic level, their evolution could be adequately
described by a collection of chemical reactions.
The main forces driving this evolution are selection and mutation.
Accordingly, the chemical reactions model the replication or the degradation
of each type of macromolecule.
Moreover the replication process is subject to errors caused by mutations.
Each type of macromolecule is classified according to its genotype. We denote by $E$ the
set of the possible genotypes.
The speed of reproduction of a macromolecule is a function of its genotype 
and it is given by a fitness function
$f:E\to{\mathbb R}^+$.
Finally, the probability that a macromolecule with genotype~$u$ mutates into
a macromolecule with genotype~$v$ is denoted by $M(u,v)$.
The concentration $x(v)$ of the genotype $v\in E$ 
evolves according to the differential equation
$$\frac{d}{dt}x_t(v)\,=\,
\sum_{u\in E}x_t(u)f(u)M(u,v)
-x_t(v)\sum_{u\in E}x_t(u)f(u)\,.$$
The first term
accounts for the production
of individuals having genotype $v$,
production due to erroneous replication of other genotypes
as well as faithful replication of itself.
The negative term accounts 
for the loss of individuals having genotype $v$,
and keeps the total concentration of individuals constant.
We shall focus on the
stationary solutions of Eigen's system, that is, the solutions of the system
$$\forall u\in E\qquad
x(u)\,\sum_{v\in E}x(v)f(v)\,=\,
\,\sum_{v\in E}x(v)f(v)M(v,u)\,
\qquad\qquad(\cal S)$$
subject to the constraint
$$\forall u\in E\qquad x(u)\geq 0\,,\qquad
\sum_{u\in E}x(u)\,=\,1\,.  \qquad\qquad(\cal C)$$
\section{Perron--Frobenius}\label{PF}
Suppose that $(x(u))_{u\in E}$ is 
a solution to~$(\cal S)$ which satisfies 
$(\cal C)$.
The mean fitness
$\sum_{v\in E}x(v)f(v)$
is then an eigenvalue 
of the matrix 
$$fM\,=\,(f(u)M(u,v))_{u,v\in  E}\,,$$
and 
$(x(u))_{u\in E}$ is an associated eigenvector,
whose components are non--negative.
There is a well--known framework where this problem has a
satisfactory and simple answer, given by the 
famous 
Perron--Frobenius theorem \cite{SEN}.
This theorem can be applied to any finite matrix with positive coefficients.
Thus we consider the following hypothesis. 
\medskip

\noindent
{\bf Hypothesis $(\cH)$.}
We suppose that the genotype space $E$ is finite, that
the fitness function $f$ is positive and that all the coefficients of the
mutation matrix $M$ are positive.
\medskip

\noindent
Under hypothesis $(\cH)$, 
we can apply the 
Perron--Frobenius theorem to
the matrix $fM$.
Let $\lambda$ be its Perron--Frobenius eigenvalue. 
The corresponding eigenspace has dimension~$1$ and it contains
an eigenvector associated to~$\lambda$ whose components
are non--negative. Moreover any eigenvector of $fM$ whose components are all
non--negative is associated to the eigenvalue~$\lambda$.
Therefore
the system~$(\cal S)$ admits a uni\-que solution 
satisfying the constraint $(\cal C)$. This solution is the
eigenvector 
$(x(u))_{u\in E}$ of the matrix $fM$,
associated to the Perron--Frobenius eigenvalue~$\lambda$,
which satisfies in addition
$$\lambda\,=\,\sum_{v\in E}x(v)f(v)\,.$$
\noindent

\section{Genotypes and mutations}
Ideally, we would like to have explicit formulae for $\lambda$ and $x$ in terms of $f$ and $M$.
There is little hope of obtaining such explicit formulae in the general case. Therefore,
we focus on a particular choice of
the set of genotypes $E$ 
and of the mutation matrix $M$. 
Both for practical and historical reasons,
we make the same choice as Eigen did.

\medskip
\noindent
{\bf Genotypes.}
We consider the different genotypes to be sequences of length $\ell\geq1$ over the alphabet $\lbrace\,0,1\,\rbrace$.
The space $E=\lbrace\,0,1\,\rbrace^\ell$
is often referred to as the $\ell$--dimensional hypercube. 
The hypercube is endowed with a natural
distance, called the Hamming distance,
which counts the number of
different digits between two different sequences: 
$$\forall\,u,v\in\lbrace\,0,1\,\rbrace^\ell\qquad
d(u,v)\,=\,\text{card}\,\big\lbrace\,
1\leq i\leq \ell:
u(i)\neq v(i)
\,\big\rbrace\,.$$
\noindent
{\bf Mutations.}
We suppose that mutations happen independently over each site of the sequence,
with probability $q\in\,]0,1[\,$.
For $u,v\in\lbrace\,0,1\,\rbrace^\ell$,
the mutation probability $M(u,v)$ is
thus given by
$$M(u,v)\,=\,q^{d(u,v)}(1-q)^{\ell-d(u,v)}\,.$$
\noindent
We have not specified the fitness function yet.
Let us consider first the simplest possible scenario, a constant fitness function:
$f(u)=c>0$ for all $u\in \lbrace\,0,1\,\rbrace^\ell$. 
When the fitness function is constant, there is no selection among different genotypes, and we say that the population is selectively neutral. 
Under the constraint $(\cC)$, since $f$ is constant,
$$\lambda\,=\,
\sum_{v\in E}x(v)f(v)\,=\,c\,.$$
With our choice of the mutation scheme, the matrix $M$ is symmetric, 
thanks to the symmetry of the Hamming distance. The matrix $M$ is also stochastic,
that is, each row of the matrix adds up to 1.
It is thus doubly stochastic, that is,
each column of the matrix adds up to 1 too.
We conclude that,
for a constant fitness function, the unique solution of $(\cS)$
satisfying the constraint $(\cC)$ is given by
$$x(u)\,=\,\frac{1}{|E|}\,=\,\frac{1}{2^\ell}\,,\qquad
u\in\lbrace\,0,1\,\rbrace^\ell\,.$$
However,
adaptive neutrality is seldom found in biological populations.
We thus embark on a quest for explicit formulae involving more complex fitness functions.
\section{Sharp peak landscape}
The simplest non neutral fitness function
which comes to mind
is the sharp peak:
there is a privileged genotype,
$w^*\in\lbrace\,0,1\,\rbrace^\ell$,
referred to as the master sequence,
which has a higher fitness than the rest.
Let $\sigma>1$ and let 
the fitness function $f$ be given by
$$\forall\,u\in\lbrace\,0,1\,\rbrace^\ell\qquad
f(u)\,=\,\begin{cases}
\quad\sigma&\quad\text{if}\quad u=w^*\,,\\
\quad1&\quad\text{if}\quad u\neq w^*\,.
\end{cases}$$
This is the fitness function that Eigen studied in detail in his article~\cite{EI1}.
One of the main advantages of working with the sharp peak is that we can break the space of genotypes into Hamming classes.
For $k\in\zl$,
the Hamming class $k$,
denoted by $\cH_k$,
is the subset of $\lbrace\,0,1\,\rbrace^\ell$
containing all the genotypes
that are at Hamming distance $k$ 
from the master sequence.
%
Let us define the function
$f_H:\zl\to\R^+$ by
$$\forall\,k\in\zl\qquad
f_H(k)\,=\,\begin{cases}
\quad\sigma&\quad\text{if}\quad k=0\,,\\
\quad1&\quad\text{if}\quad k>0\,.
\end{cases}$$
For each $k$,
the value $f_H(k)$
is the fitness common to all the genotypes in the Hamming class $k$.
As the next lemma shows,
the mutation probabilities can also be lumped over Hamming classes.
Let $b,c\in\zl$ and let $X,Y$
be independent random variables with binomial distributions
$X\sim\text{Bin}(b,q)$, 
${Y\sim\text{Bin}(\ell-b,q)}$
and define
$$M_H(b,c)\,=\,P\big(
b-X+Y=c
\big)\,.$$
\begin{lemma}
Let $b,c\in\zl$.
For any genotype $u$ in the Hamming class $b$, we have
$$\sum_{v\in\cH_c}M(u,v)\,=\,M_H(b,c)\,.$$
\end{lemma}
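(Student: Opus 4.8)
The plan is to read the left-hand side probabilistically as the probability that a single mutation step starting from $u$ lands in the Hamming class $\cH_c$, and then to track how the Hamming distance to $w^*$ changes site by site. The whole statement is really a change of variables in a binomial convolution, so I would aim to make that convolution appear without any explicit summation.

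First I would observe that, for a fixed genotype $u$, the row $\big(M(u,v)\big)_{v\in\zul}$ is a probability distribution on $\zul$: it is exactly the law of the random genotype $V$ obtained from $u$ by flipping each of the $\ell$ sites independently with probability $q$. Indeed, producing a given $v$ from $u$ forces us to flip precisely the $d(u,v)$ sites where $u$ and $v$ differ and to leave the others untouched, an event of probability $q^{d(u,v)}(1-q)^{\ell-d(u,v)}=M(u,v)$. Consequently
$$\sum_{v\in\cH_c}M(u,v)\,=\,P\big(V\in\cH_c\big)\,=\,P\big(d(V,w^*)=c\big)\,.$$

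Next I would split the $\ell$ sites into two groups according to $u$ and $w^*$: the $b$ sites where $u(i)\neq w^*(i)$, and the $\ell-b$ sites where $u(i)=w^*(i)$, using that $u\in\cH_b$ to get exactly $b$ sites of the first kind. Let $X$ count the sites flipped in the first group and $Y$ the sites flipped in the second group; since the flips are independent $\text{Bernoulli}(q)$ variables, $X$ and $Y$ are independent with $X\sim\text{Bin}(b,q)$ and $Y\sim\text{Bin}(\ell-b,q)$. The key step is then to express $d(V,w^*)$ in terms of $X$ and $Y$: on a first-group site $V$ agrees with $w^*$ if and only if that site was flipped, so these sites contribute $b-X$ to the distance, while on a second-group site $V$ disagrees with $w^*$ if and only if it was flipped, contributing $Y$. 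Hence $d(V,w^*)=b-X+Y$, and therefore
$$\sum_{v\in\cH_c}M(u,v)\,=\,P\big(b-X+Y=c\big)\,=\,M_H(b,c)\,,$$
which is the claim.

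The only real content is the bookkeeping in that key step: recognizing that mutations on the sites where $u$ already differs from $w^*$ \emph{decrease} the distance, while mutations on the agreeing sites \emph{increase} it. Once this sign accounting is in place the result is immediate and no summation identity is needed. A purely combinatorial alternative would expand $\sum_{v\in\cH_c}q^{d(u,v)}(1-q)^{\ell-d(u,v)}$ by classifying each $v$ according to how many of its discrepancies with $u$ lie in each group, but this merely rewrites the convolution $\sum_{j}P(X=j)\,P(Y=c-b+j)$ in heavier notation, so I would favour the probabilistic argument.
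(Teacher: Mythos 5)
Your proof is correct and takes essentially the same route as the paper: both interpret $\sum_{v\in\cH_c}M(u,v)$ as the probability that $u$ lands in class $c$ after independent per-site mutations, split the sites into the $b$ ``incorrect'' and $\ell-b$ ``correct'' ones, and identify the two independent binomial counts $X\sim\text{Bin}(b,q)$ and $Y\sim\text{Bin}(\ell-b,q)$ so that the new distance is $b-X+Y$. Your write-up merely makes the sign bookkeeping $d(V,w^*)=b-X+Y$ more explicit than the paper does, which is a fine addition but not a different argument.
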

\begin{proof}
The quantity $\sum_{v\in\cH_c}M(u,v)$
is the probability of $u$ 
ending up in the class $c$ after mutation.
We call digits in a given genotype
correct or incorrect depending on whether 
they coincide with the master sequence or not.
Since $u$ is in the Hamming class $b$,
it has $b$ incorrect digits and $\ell-b$ correct ones.
Each digit changes state according to a Bernoulli random variable of parameter $q$.
Therefore,
the law of creating correct digits from the incorrect ones is $\text{Bin}(b,q)$.
Likewise, the law of creating incorrect digits from the correct ones is $\text{Bin}(\ell-b,q)$.
Noting that these binomial laws are independent of the placement of the correct and incorrect digits (and therefore of each other), we get the desired result.
\end{proof}

\noindent
Let $k\in\zl$.
Adding up the equations
of the system $(\cS)$ for $u\in\cH_k$
we get
$$
\sum_{u\in\cH_k}
x(u)\sum_{0\leq h\leq\ell}
\,\sum_{v\in\cH_h}x(v)f(v)\,=\,
\sum_{0\leq h\leq \ell}
\,\sum_{v\in\cH_h}x(v)f(v)
\sum_{u\in\cH_k}M(v,u)\,.
$$
We set 
$y(k)=\sum_{u\in\cH_k}x(u)$.
In view of the above remarks, we obtain 
the system
$$
y(k)\sum_{0\leq h\leq \ell}
y(h)f_H(h)\,=\,
\sum_{0\leq h\leq \ell}y(h)f_H(h)M_H(h,k)\,,\quad
0\leq k\leq\ell
\,.$$
The number of equations has been reduced from $2^\ell$ to $\ell+1$.
Moreover the new system still has the same form as $(\cS)$, and therefore all the considerations of section~\ref{PF} still hold for the new system.
Under the constraint $(\cC)$, the mean fitness
might be rewritten as
$$\sum_{0\leq h\leq \ell}y(h)f_H(h)\,=\,
(\sigma-1)y(0)+1\,.$$
The above system becomes then
$$
y(k)\big(
(\sigma-1)y(0)+1
\big)\,=\,
\sum_{0\leq h\leq \ell}y(h)f_H(h)M_H(h,k)\,,\quad
0\leq k\leq\ell
\,.$$
\section{Long chain regime}
Although this system of equations is much simpler than the initial one,
explicit formulae for $y$
are still out of hand.
In order to get simple and useful formulae,
we consider the asymptotic regime
$$\ell\to+\infty\qquad\qquad
q\to 0\qquad\qquad
\ell q\to a\in\,]0,+\infty[\,\,.$$
This asymptotic regime,
already considered by Eigen,
 arises naturally 
when modeling a population of individuals with a very long genome, in which the mean number of observed mutations per individual per generation is $a$. 
\begin{lemma}
Let $b,c\geq 0$. 
The mutation probability $M_H(b,c)$ satisfies
$$\lim_{\genfrac{}{}{0pt}{1}{\ell\to\infty,\,q\to 0}{\ell q\to a}}
\,M_H(b,c)\,=\,\begin{cases}
\quad\displaystyle\exa\frac{a^{c-b}}{(c-b)!}\quad&\text{if}\quad c\geq b\,,\\
\quad 0\quad&\text{if}\quad c<b\,.
\end{cases}$$
\end{lemma}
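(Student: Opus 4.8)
The plan is to exploit the fact that the two binomial variables behave very differently in this regime. The variable $X\sim\text{Bin}(b,q)$ has a fixed number $b$ of trials and a vanishing success probability $q\to 0$, so it concentrates at $0$: indeed $P(X=0)=(1-q)^b\to 1$. By contrast $Y\sim\text{Bin}(\ell-b,q)$ has a growing number of trials $\ell-b\to\infty$ while $(\ell-b)q\to a$ (since $\ell q\to a$ and $bq\to 0$), so by the classical Poisson limit theorem it converges in law to a Poisson variable of parameter $a$. First I would rewrite
$$M_H(b,c)\,=\,P\big(Y-X=c-b\big)\,,$$
which makes transparent that the Hamming distance moves from $b$ to $c$ precisely when the net number of newly created mutations minus repaired ones equals $c-b$.

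For the case $c<b$ the argument is immediate: since $Y\geq 0$, the event $\{\,b-X+Y=c\,\}$ forces $X\geq b-c\geq 1$, hence $M_H(b,c)\leq P(X\geq 1)=1-(1-q)^b$, which tends to $0$ as $q\to 0$ with $b$ fixed. For the case $c\geq b$, set $k=c-b\geq 0$ and decompose according to the value of $X$, using independence:
$$M_H(b,c)\,=\,\sum_{j=0}^{b}P(X=j)\,P(Y=k+j)\,.$$
The term $j=0$ equals $(1-q)^b\,P(Y=k)$, where $(1-q)^b\to 1$ while $P(Y=k)\to\exa\,a^k/k!$ by the Poisson limit for $Y$; every remaining term is bounded by $P(X\geq 1)\to 0$. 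Hence the sum converges to $\exa\,a^{c-b}/(c-b)!$, as claimed.

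The only genuine ingredient is the Poisson convergence of $Y$, which I would either invoke as a standard result or verify directly by writing $P(Y=k)=\binom{\ell-b}{k}q^k(1-q)^{\ell-b-k}$ and passing to the limit: the factor $\binom{\ell-b}{k}q^k\to a^k/k!$ because $(\ell-b)q\to a$ and $\binom{\ell-b}{k}/(\ell-b)^k\to 1/k!$, while $(1-q)^{\ell-b-k}\to\exa$ since $(\ell-b)\log(1-q)\sim-(\ell-b)q\to -a$, the next order term $(\ell-b)q^2$ tending to $0$. Everything else reduces to the elementary estimate $P(X\geq 1)=1-(1-q)^b\to 0$, so the main point is simply to keep the two limiting behaviours — concentration of $X$ at $0$ and Poissonization of $Y$ — cleanly separated, and to check that the error terms coming from $X\geq 1$ are uniformly negligible.
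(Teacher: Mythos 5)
Your proof is correct and follows essentially the same route as the paper: both rewrite $M_H(b,c)$ as $P(Y-X=c-b)$, note that $X\sim\text{Bin}(b,q)$ concentrates at $0$ while $Y\sim\text{Bin}(\ell-b,q)$ converges to a Poisson law of parameter $a$, and conclude. The paper simply invokes these two convergences as standard facts, whereas you fill in the details (the decomposition over the values of $X$, the error bound $P(X\geq 1)\to 0$, and a direct verification of the Poisson limit), which is a welcome but not conceptually different elaboration.
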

\begin{proof}
Recall that if $X\sim\text{Bin}(b,q)$
and $Y\sim\text{Bin}(\ell-b,q)$
are independent random variables,
then
$$M_H(b,c)\,=\,P(-X+Y=c-b)\,.$$
Since $b$ is fixed,
the law $\text{Bin}(b,q)$ converges
to a Dirac mass at $0$,
and the law $\text{Bin}(\ell-b,q)$
converges to a Poisson law of parameter $a$.
The formula appearing in the lemma
is precisely the probability of a Poisson random variable of parameter $a$ being equal to $c-b$.
\end{proof}

\medskip
\noindent
In view of this lemma, 
passing to the limit in the finite system, we obtain
the infinite system of equations
$$
y(k)\big(
(\sigma-1)y(0)+1
\big)\,=\,\sum_{0\leq h\leq k}
y(h)f_H(h)e^{-a}\frac{a^{k-h}}{(k-h)!}\,,\quad k\geq0.\quad
(\cS_{sp})$$
Let's take a look at the equation for $k=0$ first:
$$y(0)\big(
(\sigma-1)y(0)+1
\big)\,=\,y(0)\sigma e^{-a}\,.$$
The only two solutions to this equation are
$$y(0)\,=\,0\qquad\qquad\text{and}\qquad\qquad
y(0)\,=\,\frac{\sigma e^{-a}-1}{\sigma-1}\,.$$
On one hand, if $y(0)=0$, it can be seen by induction that $y$ is identically $0$, so this solution does not satisfy the constraint $(\cC)$.
On the other hand, the second solution for $y(0)$ is positive if and only if 
$\sigma e^{-a}>1$. Let us suppose that
$\sigma e^{-a}>1$, for we can only expect to find a solution satisfying 
the constraint $(\cC)$ in this case, and let us solve the recurrence relation defined by
$(\cS_{sp})$, with initial condition $y(0)=(\sigma e^{-a}-1)/(\sigma-1)$.
Replacing $y(0)$ on the left hand side 
of $(\cS_{sp})$ and dividing by $e^{-a}$
on both sides, the recurrence relation becomes
$$y(k)\sigma\,=\,y(0)\sigma\frac{a^k}{k!}
+\sum_{1\leq h\leq k}y(h)\frac{a^{k-h}}{(k-h)!}\,,\qquad
k\geq1\,.
$$
\section{The distribution of the quasispecies}
We choose to solve the recurrence relation by the method of generating functions (a beautiful account of this method can be found in chapter~7 of~\cite{GKP}). Set
$$g(X)\,=\,\sum_{k\geq 0}y(k)X^k\,.$$
Using the recurrence relation, we have
\begin{multline*}
g(X)e^{aX}\,=\,
\sum_{k\geq 0}\sum_{h=0}^k
y(h)\frac{a^{k-h}}{(k-h)!}X^k\\
=\,\sum_{k\geq 0}\Big(
y(k)\sigma-y(0)(\sigma-1)\frac{a^k}{k!}
\Big)X^k\,=\,
\sigma g(X)-y(0)(\sigma-1)e^{aX}\,.
\end{multline*}
Replacing $y(0)$ by its value, we get
\begin{multline*}
g(X)\,=\,(\sigma e^{-a}-1)\frac{e^{aX}}{\sigma-e^{aX}}\,=\,
(\sigma e^{-a}-1)\sum_{h\geq 1}\Big(
\frac{e^{aX}}{\sigma}
\Big)^h\\
=\,(\sigma e^{-a}-1)\sum_{h\geq 1}\frac{1}{\sigma^h}\sum_{k\geq 0}\frac{(ah)^k}{k!}X^k
\,=\,(\sigma e^{-a}-1)\sum_{k\geq 0}\frac{a^k}{k!}
\sum_{h\geq 1}\frac{h^k}{\sigma^h}X^k\,.
\end{multline*}
We deduce from here that
$$\forall\,k\geq 0\qquad
y(k)\,=\,(\sigma e^{-a}-1)\frac{a^k}{k!}
\sum_{h\geq 1}\frac{h^k}{\sigma^h}\,.$$
Eigen described the quasispecies as a population of individuals having a positive concentration of the master sequence along with a cloud of mutants. 
We now have an explicit formula for the concentrations of the master sequence and the different mutant classes in Eigen's original quasispecies model.
\begin{definition}
Let $\sigma,a$ be such that $\sigma e^{-a}>1$.
We say that a random variable $X$ has the distribution of
the quasispecies of 
parameters $\sigma$ and $a$, 
and we write ${X\sim\cQ(\s,a)}$,
if
$$\forall k\geq 0\qquad P(X=k)\,=\,(\sigma e^{-a}-1)\frac{a^k}{k!}
\sum_{h\geq 1}\frac{h^k}{\sigma^h}\,.$$
\end{definition}
%
\begin{figure}
\hspace*{-1 em}
\includegraphics[trim=0.6cm 0.5cm 0.4cm 1.1cm, clip=true, scale=0.6]{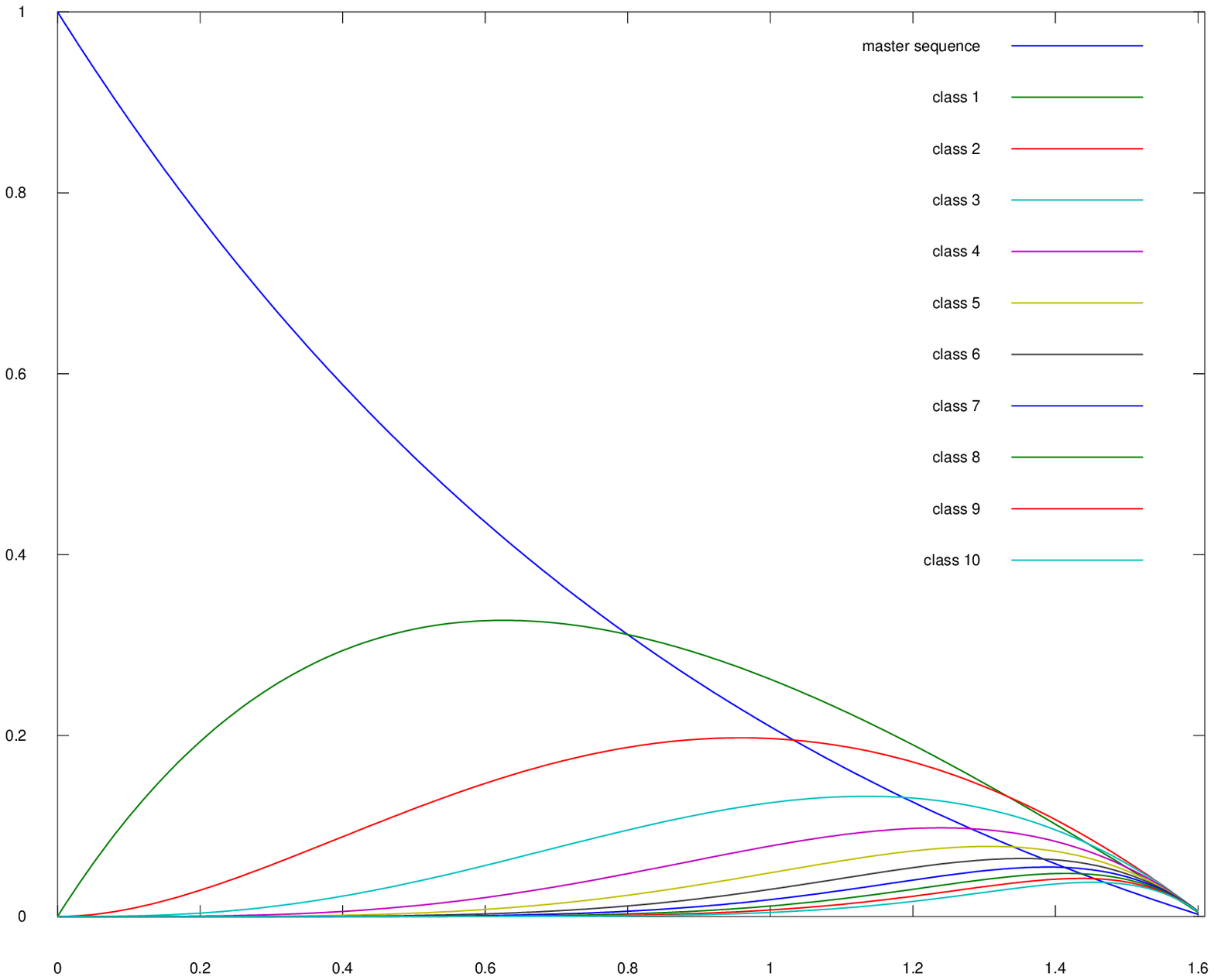}

Frequency of the MS
and the first 10 classes
as a function of $a$
for $\s=5$ .

%
%
 \hspace*{-1 em}
 \includegraphics[trim=0.6cm 0.5cm 0.4cm 0cm, clip=true, scale=0.6]{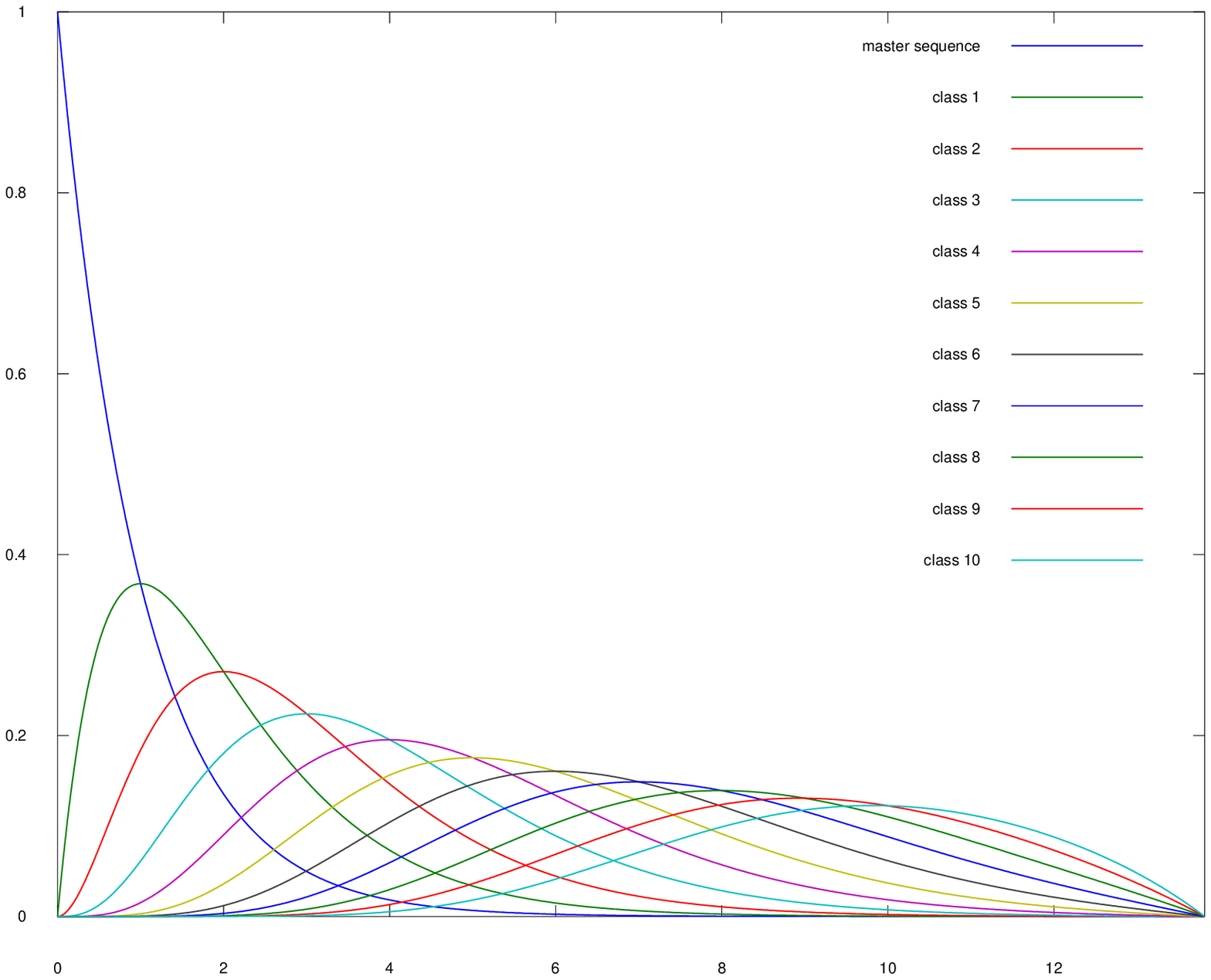}

Frequency of the MS
and the first 10 classes
as a function of $a$
for $\s=10^6$.
\end{figure}
The above formula 
is a genuine probability distribution, indeed all these numbers add up to one,
as can be seen by
replacing $X$ by $1$
in the equality
$$g(X)\,=\,(\sigma e^{-a}-1)\frac{e^{aX}}{\sigma-e^{aX}}\,.$$
The quasispecies distribution $\cQ(\s,a)$ can be expressed
in terms of the polylogarithm or Jonqui\`ere's function. Let $s,z\in\mathbb{C}$, with $|z|<1$.
The polylogarithm of order $s$ and argument $z$ is defined by
$$Li_{s}(z)\,=\,\sum_{h\geq 1}\frac{z^h}{h^s}\,.$$
In view of this definition,
$$\forall\, k\geq 0\qquad
y(k)\,=\,(\sigma e^{-a}-1)\frac{a^k}{k!}
Li_{-k}\Big(
\frac{1}{\sigma}
\Big)\,.$$
\section{Eulerian numbers}
We look next for an expression of $y(k)$
involving just a finite number of terms, instead of a series.
Let $s=1/\sigma$ and consider the well known identity
$$\sum_{h\geq 1}s^h\,=\,\frac{s}{1-s}\,.$$
We repeatedly derive and multiply by $s$
this equality, thus getting
\begin{align*}
\sum_{h\geq 1}hs^h\,&=\,\frac{s}{(1-s)^2}\,,\\
\sum_{h\geq 1}h^2s^h\,&=\,\frac{s}{(1-s)^3}(1+s)\,,\\
\sum_{h\geq 1}h^3s^h\,&=\,\frac{s}{(1-s)^4}(1+4s+s^2)\,,\\
\sum_{h\geq 1}h^4s^h\,&=\,\frac{s}{(1-s)^5}(1+11s+11s^2+s^3)\,.
\end{align*}
The numbers appearing on the right hand side
are the Eulerian numbers, and the polynomials
are the Eulerian polynomials.
\begin{definition}
For $0\leq h<k$,
the Eulerian number 
$\genfrac<>{0pt}{}{k}{h}$
is defined as the number of permutations of $\lbrace\,1,\dots,k\,\rbrace$
having exactly $k$ ascents,
that is, $k$ elements that are greater than the previous element in the permutation.
\end{definition}
The Eulerian numbers satisfy the
following identity:
$$\forall k\geq1\qquad
\sum_{h\geq 1}h^k s^h\,=\,
\frac{s}{(1-s)^{k+1}}
\sum_{h=0}^{k-1}
\genfrac<>{0pt}{}{k}{h}\,
s^h
\,.$$
Coming back to the variable $\sigma$, we get
$$\forall k\geq1\qquad
\sum_{h\geq 1}\frac{h^k}{\sigma^h}\,=\,
\frac{\sigma}{(\sigma-1)^{k+1}}
\sum_{h=0}^{k-1}
\genfrac<>{0pt}{}{k}{h}\,
\sigma^{k-h-1}
\,.$$
Using the classical identity
$\genfrac<>{0pt}{}{k}{h}\,=\,
\genfrac<>{0pt}{}{k}{k-1-h}$, and making the change of variable $h\to k-1-h$ in
the previous sum, we can express
the quantities $y(k)$ 
in terms of the Eulerian numbers:
$$\forall\,k\geq 1\qquad
y(k)\,=\,(\sigma e^{-a}-1)\frac{a^k}{k!}
\frac{\sigma}{(\sigma-1)^{k+1}}
\sum_{h=0}^{k-1}
\genfrac<>{0pt}{}{k}{h}\,
\sigma^{h}\,.$$
\section{Stirling numbers}
We have just seen that the concentration of class~$k$ in the
quasispecies distribution is a rational fraction in the variable~$\sigma$, with denominator
${(\sigma-1)^{k+1}}$ and numerator 
$(\sigma e^{-a}-1)\sigma$
times the $k$--th Eulerian polynomial.
Let us compute the partial fraction decomposition of this rational fraction. 
More precisely, we seek a sequence of real numbers $A_1,\dots,A_k$ such that
$$\forall\,k\geq 1\qquad
y(k)\,=\,(\sigma e^{-a}-1)\frac{a^k}{k!}
\frac{\sigma}{(\sigma-1)}
\sum_{h=1}^{k}
\frac{A_h}{(\sigma-1)^h}\,.$$
To find the values of the coefficients $A_h$, we write the Eulerian polynomial in terms
of the powers of $(\sigma-1)$:
\begin{align*}
\sum_{h=0}^{k-1}
\genfrac<>{0pt}{}{k}{h}\,
\sigma^{h}\,&=\,
\sum_{h=0}^{k-1}
\genfrac<>{0pt}{}{k}{h}\,
\sum_{j=0}^{h}
\genfrac(){0pt}{}{h}{j}(\sigma-1)^j\\
\,&=\,\sum_{j=0}^{k-1}
\Bigg(
\sum_{h=j}^{k-1}
\genfrac<>{0pt}{}{k}{h}\,
\genfrac(){0pt}{}{h}{j}\Bigg)
(\sigma-1)^j
\,.
\end{align*}
\begin{definition}
For $0\leq h\leq k$,
the Stirling number 
$\genfrac\{\}{0pt}{}{k}{h}$
is defined as the number of partitions of a set of cardinality $k$
into $h$ non empty subsets.
\end{definition}
The Stirling and Eulerian numbers are linked through the classical
identity
$$\sum_{h=j}^{k-1}
\genfrac<>{0pt}{}{k}{h}\,
\genfrac(){0pt}{}{h}{j}\,=\,(k-j)!\,
\genfrac\{\}{0pt}{}{k}{k-j}\,.$$
See for instance \cite{MT}, Proposition~$5.83$.
Reporting in the expression involving the Eulerian polynomial, and reindexing the sum,
we get
$$\forall\,k\geq 1\qquad
y(k)\,=\,(\sigma e^{-a}-1)\frac{a^k}{k!}
\frac{\sigma}{(\sigma-1)}
\sum_{h=1}^{k}
\frac{ h!
\genfrac\{\}{0pt}{}{k}{h}\,
}{(\sigma-1)^h}\,.$$

\section{Class--dependent fitness landscapes}
\label{cdf}
We have obtained explicit formulae
for the distribution of the quasispecies on the sharp peak landscape.
To get these formulae,
two ingredients have played
a key role: the Hamming classes and the asymptotic regime.
Yet, the strategy employed for the sharp
peak landscape still makes sense for a wider class of fitness functions,
namely, the fitness functions that only depend on the distance to the master sequence.
We consider thus the analogue of system $(\cS_{sp})$
for a general function $f:\mathbb{N}\to\R^+$:
$$
y(k)\sum_{h\geq0}y(h)f(h)\,=\,
\sum_{0\leq h\leq k}y(h)f(h)
e^{-a}\frac{a^{k-h}}{(k-h)!}\,,\quad k\geq 0\,.\qquad
(\cS_H)$$
We are only interested in the solutions of $\cS_H$ that satisfy constraint $(\cC)$
and such that $y(0)>0$.
For if $y(0)$ is a solution of $(\cS_H)$
with $y(0)=0$,
we can ignore the equation for $k=0$,
and the remaining system of equations falls into the form of $(\cS_H)$ again.
Thus, let us suppose that $y(0)>0$.
We look first at the equation for $k=0$:
$$y(0)\sum_{h\geq0}y(h)f(h)\,=\,y(0)f(0)e^{-a}\,.$$
Since we are assuming that $y(0)$ is positive,
the mean fitness, given by $\sum_{h\geq0}y(h)f(h)$,
must be equal to $f(0)e^{-a}$.
We make the change of variables $z(k)=y(k)/y(0)$,
we replace the mean fitness by $f(0)e^{-a}$ in $(\cS_H)$,
and we divide both sides by $e^{-a}$,
thus obtaining the recurrence relation
$$
z(k)f(0)\,=\,\sum_{0\leq h\leq k}z(h)f(h)\frac{a^{k-h}}{(k-h)!}\,,\qquad k\geq1\,,
$$
with initial condition $z(0)=1$. In order to get positive solutions, we make the following
hypothesis.

\medskip
\noindent
{\bf Hypothesis ($\mathcal{H}'$).}
We suppose that the fitness of the Hamming class $0$ is greater than the fitness of the other classes, i.e., $f(0)>f(k)$
for all $k\geq1$.

\medskip
\noindent
This hypothesis is coherent
with the Hamming class $0$ corresponding to the master sequence, which is 
the fittest genotype.
The method of generating functions cannot be implemented as easily as on the sharp peak
landscape.
However, it can be first
guessed and then shown by induction that, for all $k\geq1$,
$$z(k)\,=\,\frac{a^k}{k!}\frac{f(0)}{f(k)}
\!\!\!\!\!\!\sum_{\genfrac{}{}{0pt}{1}{1\leq h\leq k}{0=i_0<\cdots<i_h=k}}\!\!\!\!\!\!
\frac{k!}{(i_1-i_0)!\cdots(i_h-i_{h-1})!}
\prod_{1\leq t\leq h}\frac{f(i_t)}{f(0)-f(i_t)}\,.$$
\section{Up--down coefficients}
\label{upc}
If we apply the previous formula to the sharp peak landscape, we recover the formula
for the quasispecies involving the Stirling numbers. Indeed, in this case,
the last product
depends only on~$h$ (it is equal to $(\sigma-1)^h$) and the sum of the multinomial
coefficients is precisely equal to 
$h!
\genfrac\{\}{0pt}{}{k}{h}$.
There is yet another formula for the quantities $y(k)$,
which is the analogue of
the formula involving the Eulerian numbers in the case of the sharp peak landscape.
In order to present this formula,
we introduce the up--down numbers
or up--down coefficients.
Let $n\geq 2$,
and let $\sigma=(\sigma(1),\dots,\sigma(n))$
be a permutation of $1,\dots,n$.
The ascents and descents of $\sigma$
are codified by the Niven signature of $\sigma$,
that is, an array $(q_1,\dots,q_{n-1})\in\lbrace\,-1,+1\,\rbrace^{n-1}$
such that the product $q_i(\sigma(i+1)-\sigma(i))$
is positive for all $i$.
The up--down numbers, which we define next,
count the number of permutations sharing 
the same pattern of ascents and descents.
\begin{definition}
Let $n\geq2$ and let $I$ be a subset of $\{\,1,\dots, n-1\,\}$.
The up--down coefficient 
$\updown{n}{I}$
is defined as the number of permutations 
of $1,\dots,n$ having ascents in the positions $I$ and descents elsewhere.
In another words, it is the number of permutations of $1,\dots,n$ having for Niven's signature 
$$\forall\,i\in\lbrace\,1,\dots,n-1\,\rbrace\qquad
q_i\,=\,\begin{cases}
\quad +1\quad&\text{if}\quad i\in I\,,\\
\quad -1\quad&\text{if}\quad i\not\in I\,.\\
\end{cases}$$
\end{definition}
It turns out that
the quantities $z(k)$ can be expressed with the help 
of the up--down coefficients.
For all $k\geq 1$, we have
$$z(k)\,=\,\frac{a^k}{k!}
\bigg(
\prod_{1\leq j\leq k}\frac{f(0)}{f(0)-f(j)}
\bigg)
\sum_{I\subset\{\,1,\dots,k-1\,\}}
\bigg(
\updown{k}{I}
\prod_{i\in I}\frac{f(i)}{f(0)}
\bigg)
\,.$$
In the case of the sharp peak landscape, the last product depends only on the cardinality of $I$,
it is equal to $\sigma^{-|I|}$;
if we sum all the terms corresponding to subsets $I$ of cardinality $h$, we obtain
precisely the number of permutations of $1,\dots,k$ having $h$ ascents, which is equal
to the Eulerian number 
$\genfrac<>{0pt}{}{k}{h}$.

We obtained the above formula by writing explicitly the coefficients for small values of~$k$.
With the help of Sloane's on--line encyclopedia of integer sequences~\cite{Sloane}, we discovered that these
coefficients were the up--down coefficients.
Our first proof of the formula, done in \cite{CD},
relied on a difficult combinatorial identity due
to Carlitz \cite{CA}.
We present here a simpler more direct derivation. The strategy is to think of this formula
as a rational fraction in the variables $f(1),\dots,f(k)$ and to compute its partial
fraction decomposition, which turns out to be the formula given in
the previous section.
Thus we follow the inverse road that led us from the Eulerian numbers to the
Stirling numbers when we were playing with the quasispecies on the sharp peak
landscape.
Let us start. We define $K=\{\,1,\dots,k-1\,\}$, we
rewrite the above formula as
$$z(k)\,=\,\frac{a^k}{k!}
f(0)\bigg(
\prod_{1\leq j\leq k}\frac{1}{f(0)-f(j)}
\bigg)
\sum_{I\subset K}
\bigg(
\updown{k}{I}
{f(0)^{k-|I|-1}}
\prod_{i\in I}{f(i)}
\bigg)
\,$$
and we expand the power 
${f(0)^{k-|I|-1}}$ as
\begin{align*}
{f(0)^{k-|I|-1}}
\,&=\,
\prod_{j\in K\setminus I}{\big(f(0)-f(j)+f(j)\big)}\\
\,&=\,
\sum_{J\subset K\setminus I}
\bigg(\prod_{j\in J}{\big(f(0)-f(j)\big)}\bigg)
\bigg(
\prod_{j\in (K\setminus I)\setminus J}f(j)\bigg)\,.
\end{align*}
Reporting and simplifying the factors $(f(0)-f(j))$, we obtain
$$z(k)\,=\,\frac{a^k}{k!}
f(0)
\sum_{I\subset K}
\sum_{J\subset K\setminus I}
\bigg(
\prod_{j\in K\cup\{k\}\setminus J}\frac{1}{f(0)-f(j)}
\bigg)
\bigg(
\updown{k}{I}
\prod_{j\in K\setminus J}{f(j)}
\bigg)\,.
$$
We reindex the sum by setting $H=K\setminus J$ and we get
$$z(k)\,=\,\frac{a^k f(0)}{k!f(k)}
\sum_{H\subset K}
\bigg(
\prod_{j\in H\cup\{k\}}\frac{f(j)}{f(0)-f(j)}
\bigg)
\bigg(
\sum_{I\subset H}
\updown{k}{I}
\bigg)\,.
$$
Let us fix $H\subset K$, say
$H=\{\,i_1,\dots,i_{h-1}\,\}$, where $1\leq h\leq k$ and
$${i_0=1\leq i_1<\cdots<i_{h-1}<k=i_h\,,}$$
 and let us focus on the last sum
$\sum_{I\subset H} \updown{k}{I}$.
This sum is the number of permutations of $1,\dots, k$
whose ascents are located in the index set~$H$.
Let $B=\big(B_1,\dots,B_h\big)$ be an ordered partition of
$\{\,1,\dots, k\,\}$ in $h$ subsets such that
$$\forall j\in\{\,1,\dots ,h\,\}\qquad|B_{j}|=
i_{j}- i_{j-1}\,.$$
We list the elements of each set $B_j$ in decreasing order:
$$\forall j\in\{\,1,\dots ,h\,\}
\qquad B_{j}\,=\,\big(b_j(1),\dots,b_j(i_j-i_{j-1})\big)\,.$$
We concatenate these lists into a single sequence:
$$b_1(1),\dots,
b_1(i_1),b_2(1),\dots, b_2(i_2-i_1),\dots,
b_h(1),\dots,
b_h(i_h-i_{h-1})\,.$$
This sequence 
corresponds to a permutation of 
$1,\dots, k$. 
This construction defines a one to one correspondence between ordered 
partitions of
$\{\,1,\dots, k\,\}$ into $h$ subsets of respective
sizes $i_1,\dots,i_h-{i_{h-1}}$
and
the set of the permutations of $1,\dots, k$
whose ascents are located in the index set~$H$.
The number of these partitions 
(called $h$--sharing in the terminology of \cite{MT},
see definition~$1.17$ and proposition~$5.5$ therein)
is precisely the multinomial coefficient 
$\frac{k!}{(i_1-i_0)!\cdots(i_h-i_{h-1})!}$
and
we conclude that
$$\sum_{I\subset H}
\updown{k}{I}\,=\,
\frac{k!}{(i_1-i_0)!\cdots(i_h-i_{h-1})!}
\,.$$
In fact, this combinatorial identity and the above argument are the
starting point of 
Carlitz work \cite{CA}. The goal of Carlitz
was to invert this formula, i.e., to express the up--down coefficients
as sums of multinomial coefficients. 
Plugging this identity in the formula for $z(k)$, we are back to the
formula obtained by induction in section~\ref{cdf}.
\section{The Perron--Frobenius eigenvector}
The previous formulae for the quasispecies are a bit mysterious. They pop up
from inductions and combinatorial identities. We seek next probabilistic
representations of the quasispecies in order to shed some light on its structure.
With that goal in mind, we start again from the framework of 
hypothesis~$(\cH)$ in
section~\ref{PF},
i.e., a finite genotype space~$E$
and a situation where Eigen's system admits a unique 
stationary solution.
Let $(S_n)_{n\in\mathbb N}$ be the Markov chain on~$E$ with 
transition matrix~$M$.
We denote by $\lambda$ the Perron--Frobenius eigenvalue of $fM$.
For $u\in E$, 
we denote by $E_u$ the expectation for the Markov chain $(S_n)_{n\in\mathbb N}$
starting from $u$ 
and
we define
$$\tau_u\,=\,\inf\,\big\{\,n\geq 1: S_n=u\,\big\}\,.$$
\begin{theorem}
\label{profor}
Suppose that $(\cal H)$ holds.
Let $w$ be an arbitrary point of~$E$.
The unique solution 
to $(\cal S)$ which satisfies 
the constraint $(\cal C)$ is given by the formula
$$
\forall u\in E\qquad
x(u)\,=\,\frac{
\displaystyle
E_w\Bigg(\sum_{n=0}^{\tau_w-1}\Big(
1_{\{S_n=u\}}
\lambda^{-n}\prod_{k=0}^{n-1}f(S_k)
\Big)
\Bigg)
}
{
\displaystyle
E_w\Bigg(\sum_{n=0}^{\tau_w-1}\Big(
\lambda^{-n}\prod_{k=0}^{n-1}f(S_k)
\Big)
\Bigg)
}\,.
$$
\end{theorem}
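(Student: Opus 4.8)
The plan is to show that the numerator of the claimed formula, viewed as a function of $u$, is a nonnegative left eigenvector of $fM$ associated with $\lambda$, and then to invoke the uniqueness part of Perron--Frobenius. Write
$$G(u)\,=\,E_w\Bigg(\sum_{n=0}^{\tau_w-1}1_{\{S_n=u\}}\lambda^{-n}\prod_{k=0}^{n-1}f(S_k)\Bigg),\qquad u\in E\,.$$
Summing over $u$ collapses the indicator and shows that $\sum_{u\in E}G(u)$ is exactly the denominator in the statement. Clearly $G(u)\geq0$ for every $u$, and the term $n=0$ gives $G(w)\geq1>0$, so $G$ is a nonnegative, nonzero vector. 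If I can prove that $G$ satisfies the left eigenvector equation $\sum_{u\in E}G(u)f(u)M(u,v)=\lambda\,G(v)$ for every $v\in E$, then, since $(\cS)$ is precisely this equation for $x$ and the associated eigenspace is one--dimensional with a unique nonnegative normalized representative (Section~\ref{PF}), it follows that $G=c\,x$ for some $c>0$; dividing $G(u)$ by $\sum_{u'}G(u')=c$ then yields the formula.

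So the core is the eigenvector identity. First I would expand its left-hand side using $f(u)M(u,v)=(fM)(u,v)$ and the definition of $G$, obtaining the expectation of $\sum_{n<\tau_w}\lambda^{-n}\big(\prod_{k=0}^{n}f(S_k)\big)M(S_n,v)$. Because the event $\{n<\tau_w\}$ is measurable with respect to $S_0,\dots,S_n$, the Markov property lets me replace $M(S_n,v)$ by the indicator $1_{\{S_{n+1}=v\}}$ inside the expectation. Reindexing with $m=n+1$ converts the sum into $\lambda\,E_w\big(\sum_{m=1}^{\tau_w}1_{\{S_m=v\}}\lambda^{-m}\prod_{k=0}^{m-1}f(S_k)\big)$. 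Comparing the range $\{1,\dots,\tau_w\}$ here with the range $\{0,\dots,\tau_w-1\}$ appearing in $\lambda G(v)$, the interior terms match and one is left with exactly two boundary contributions: the term $m=0$, worth $1_{\{v=w\}}$, and the term $m=\tau_w$, worth $1_{\{v=w\}}\lambda^{-\tau_w}\prod_{k=0}^{\tau_w-1}f(S_k)$ since $S_{\tau_w}=w$. Hence the eigenvector equation holds for every $v$ precisely when
$$E_w\Big(\lambda^{-\tau_w}\prod_{k=0}^{\tau_w-1}f(S_k)\Big)\,=\,1\,.\qquad(\star)$$

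The identity $(\star)$ is the heart of the argument, and I would establish it by a change of measure (a Doob transform). Let $r$ be a right Perron eigenvector of $fM$, so $r>0$ and $(fM)r=\lambda r$; define $\widetilde M(u,v)=f(u)M(u,v)r(v)/(\lambda r(u))$. The eigenvector relation gives $\sum_v\widetilde M(u,v)=1$, so $\widetilde M$ is a stochastic matrix, and it is positive, hence irreducible, because $M>0$ and $f,r>0$. Multiplying the one-step factors and telescoping the ratios of $r$ shows that along any path
$$\lambda^{-n}\prod_{k=0}^{n-1}f(S_k)\prod_{k=0}^{n-1}M(S_k,S_{k+1})\,=\,\frac{r(S_0)}{r(S_n)}\prod_{k=0}^{n-1}\widetilde M(S_k,S_{k+1})\,.$$
Summing this identity over the excursions $\{\tau_w=n\}$ and using that $S_0=S_{\tau_w}=w$ makes the ratio $r(S_0)/r(S_{\tau_w})$ equal to $1$, so the left side of $(\star)$ equals $\widetilde P_w(\tau_w<\infty)$; as $\widetilde M$ is a finite irreducible stochastic matrix, its chain is recurrent and this probability is $1$. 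The same transform also shows each $G(u)$ is finite, since it equals $r(w)\,\widetilde E_w\big(\sum_{n<\tau_w}1_{\{S_n=u\}}/r(S_n)\big)$, a finite excursion sum. The step I expect to be the main obstacle is recognizing and justifying this Doob transform: seeing that the purely analytic Feynman--Kac weights are exactly the path probabilities of a genuine Markov chain is what turns $(\star)$ into the triviality that a recurrent chain returns to its starting point.
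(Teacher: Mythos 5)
Your proposal is correct, and it is worth noting that the paper itself offers no real proof here: it merely asserts that the theorem ``can be verified directly by plugging the above formula in the system $(\cal S)$,'' leaving all details to the reader. Your argument is the natural full implementation of that strategy, and it correctly isolates the point where the direct verification is nontrivial. Plugging in amounts exactly to your computation: the Markov property turns $M(S_n,v)$ into $1_{\{S_{n+1}=v\}}$, the reindexing $m=n+1$ shifts the sum, and everything cancels except the two boundary terms at $m=0$ and $m=\tau_w$, so the whole verification reduces to the identity
$$E_w\Big(\lambda^{-\tau_w}\prod_{k=0}^{\tau_w-1}f(S_k)\Big)\,=\,1\,,$$
which the paper never mentions. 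Your proof of this identity via the Doob $h$-transform (tilting by a right Perron eigenvector $r$ to produce the stochastic matrix $\widetilde M(u,v)=f(u)M(u,v)r(v)/(\lambda r(u))$, so that the Feynman--Kac weight of an excursion from $w$ becomes the $\widetilde M$-probability of returning to $w$, equal to $1$ by recurrence of a finite irreducible chain) is clean and also settles the finiteness of the numerator and denominator, a point the paper likewise glosses over; one could alternatively derive the same identity from the renewal relation between the Green function and the first-return generating function of the nonnegative matrix $\lambda^{-1}fM$ evaluated at its convergence radius, but the $h$-transform route is simpler. The framing through the left-eigenvector equation plus the Perron--Frobenius uniqueness stated in the paper's Section~3 is exactly the right way to conclude, since solutions of $(\cal S)$ under $(\cal C)$ are precisely the normalized nonnegative left eigenvectors of $fM$. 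In short: same strategy the paper gestures at, but yours is an actual proof, and identifying and proving the excursion identity is its genuine content.
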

Theorem~\ref{profor} can be verified directly by plugging the above formula
in the system~$(\cS)$.
By taking $w=u$ in the formula stated in theorem~\ref{profor},
we obtain the following corollary.
\begin{corollary}
\label{trofor}
Suppose that $(\cal H)$ holds.
The unique solution 
to $(\cal S)$ which satisfies 
the constraint $(\cal C)$ is given by the formula
$$
\forall u\in E\qquad
x(u)\,=\,{
\displaystyle
\frac{1}{
\displaystyle
E_u\Bigg(\sum_{n=0}^{\tau_u-1}\Big(
\lambda^{-n}\prod_{k=0}^{n-1}f(S_k)
\Big)
\Bigg)
}
}\,.
$$
\end{corollary}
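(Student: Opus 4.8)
The plan is to read off the corollary as the special case $w=u$ of Theorem~\ref{profor}. Setting $w=u$ there, the denominator is already \emph{verbatim} the expectation appearing in the corollary, and its positivity and finiteness are inherited from Theorem~\ref{profor} (under $(\cH)$ the chain $(S_n)$ is irreducible on the finite set $E$, so $\tau_u<\infty$ almost surely). Hence the whole task reduces to showing that, for the choice $w=u$, the numerator collapses to the constant $1$.

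First I would substitute $w=u$ into the numerator, which becomes
$$E_u\Bigg(\sum_{n=0}^{\tau_u-1}\Big(1_{\{S_n=u\}}\lambda^{-n}\prod_{k=0}^{n-1}f(S_k)\Big)\Bigg).$$
The key observation is that, along any trajectory, the indicator $1_{\{S_n=u\}}$ is nonzero for exactly one index in the summation range, namely $n=0$. Indeed the chain starts at $S_0=u$, while $\tau_u=\inf\{\,n\geq1:S_n=u\,\}$ is by definition the first index $n\geq1$ at which $u$ is revisited; consequently $S_n\neq u$ for every $n$ with $1\leq n\leq\tau_u-1$, so all terms with $n\geq1$ drop out. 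It then remains to evaluate the surviving $n=0$ term: there $\lambda^{-n}=1$, the product $\prod_{k=0}^{-1}f(S_k)$ is empty and hence equals $1$, and $1_{\{S_0=u\}}=1$ deterministically. Thus the summand equals $1$ pathwise, and taking the expectation gives numerator $=1$.

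I do not expect any genuine obstacle here, since the argument is a direct specialization of the already established Theorem~\ref{profor}; the only points requiring a touch of care are the convention that the empty product is $1$ and the correct handling of the boundary index $n=0$ relative to the first--return time $\tau_u$. Substituting the value $1$ for the numerator into the formula of Theorem~\ref{profor} yields precisely the stated expression for $x(u)$, which completes the proof.
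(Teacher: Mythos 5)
Your proposal is correct and follows exactly the paper's route: the paper obtains the corollary precisely by setting $w=u$ in Theorem~\ref{profor}, and your observation that the numerator collapses to $1$ (since $S_0=u$ and $S_n\neq u$ for $1\leq n\leq\tau_u-1$, with the empty product equal to $1$ at $n=0$) is the implicit justification the paper leaves to the reader. Nothing is missing.
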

This formula is a generalization of the classical formula for the invariant 
probability measure of a Markov chain. Indeed, in the particular case where
$f$ is constant equal to~$1$, then $\lambda=1$ as well, and the system~$(\cal S)$
reduces to
$$\forall u\in E\qquad
x(u)
\,=\,
\,\sum_{v\in E}x(v)M(v,u)\,,$$
while the formula in corollary~\ref{trofor} becomes the well--known formula
$$
\forall u\in E\qquad
x(u)\,=\,
\frac{1}{
\displaystyle
\displaystyle
E_u\big({\tau_u}\big)}\,.$$
The formula of
theorem~\ref{profor} is quite general, but it applies only to finite spaces.
However, it helps to construct plausible formulae for the infinite system.
The game consists in performing a formal passage to the limit and identifying
the relevant limiting probabilistic objects.
\section{The killed Poisson walk}
\label{kpw}
In the asymptotic regime, the eigenvalue~$\lambda$ converges towards 
$f(0) e^{-a}$
and the projection on the Hamming classes of the Markov chain
$(S_n)_{n\in\mathbb N}$
converges to a random walk on the integers whose steps are distributed according
to the Poisson law of parameter~$a$. This random walk is transient, therefore in the limit
the return time to~the master sequence is either $0$ or $\infty$, and 
the limiting formula reminds of the Poisson random walk killed at rate $1-1/\lambda$.
These considerations lead to the following construction for a plausible limit.
Let ${a}>0$ and
let 
$(X_n)_{n\geq 1}$
be a sequence of i.i.d. random variables distributed according to the Poisson
law of parameter~$a$:
$$\forall n\geq 1\quad\forall k\geq 0\qquad
P(X_n=k)\,=\,e^{-a}\frac{a^k}{k!}\,.$$
We consider the associated random walk on the non--negative integers, given by
$S_0=0$ and
$$\forall n\geq 1\qquad S_n\,=\,X_1+\cdots +X_n\,.$$
From now onwards, our goal is to obtain a probabilistic representation of
the quasispecies distribution in terms of the Poisson random walk killed 
at a random time. More precisely, we aim at constructing an integer--valued
random variable $\tau$ such that
the concentrations $\smash{\big(y(k)\big)}_{k\geq 0}$ of the Hamming classes
in the quasispecies are equal to
the mean empirical
distribution of the Poisson random walk between times $1$ and $\tau$,
that is,
$$\forall k\geq 0\qquad 
y(k)\,=\,
\frac{1}{E(\tau)}
E\bigg(\sum_{n=1}^\tau1_{\{S_n=k\}}\bigg)\,.
\qquad
(\diamondsuit)$$
We start this program on the sharp peak landscape. 
\begin{proposition}
Let $\sigma,a$ be such that $\sigma e^{-a}>1$.
Let $\tau$ be a random variable, which is independent of the Poisson random walk,
with geometric distribution of parameter 
$1-(\sigma e^{-a})^{-1}$.
With this choice of $\tau$, the probabilistic representation $(\diamondsuit)$ holds 
for the quasispecies distribution
${\cQ(\s,a)}$.
\end{proposition}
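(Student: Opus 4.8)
The plan is to compute the right-hand side of $(\diamondsuit)$ explicitly and to verify that it coincides with the formula
$$y(k)\,=\,(\sigma e^{-a}-1)\frac{a^k}{k!}\sum_{h\geq 1}\frac{h^k}{\sigma^h}$$
obtained earlier for the quasispecies distribution $\cQ(\s,a)$. Writing $\rho=(\sigma e^{-a})^{-1}$, the hypothesis $\sigma e^{-a}>1$ gives $\rho\in\,]0,1[\,$, and $\tau$ has parameter $1-\rho$. Since $\tau$ is independent of the Poisson walk, the first step is to rewrite the finite random sum as an infinite series of indicators: starting from $\sum_{n=1}^\tau 1_{\{S_n=k\}}=\sum_{n\geq 1}1_{\{\tau\geq n\}}\,1_{\{S_n=k\}}$ and taking expectations, independence yields
$$E\bigg(\sum_{n=1}^\tau 1_{\{S_n=k\}}\bigg)\,=\,\sum_{n\geq 1}P(\tau\geq n)\,P(S_n=k)\,.$$

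The next step is to evaluate the two factors of this series. For the geometric tail I would use $P(\tau\geq n)=\rho^{\,n-1}$, together with $E(\tau)=1/(1-\rho)=\sigma e^{-a}/(\sigma e^{-a}-1)$. For the walk, since $S_n$ is a sum of $n$ independent Poisson variables of parameter $a$, it is itself Poisson of parameter $na$, so that $P(S_n=k)=e^{-na}(na)^k/k!$. Substituting both expressions gives
$$E\bigg(\sum_{n=1}^\tau 1_{\{S_n=k\}}\bigg)\,=\,\frac{a^k}{k!}\sum_{n\geq 1}\rho^{\,n-1}e^{-na}\,n^k\,.$$

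The identity that makes everything collapse is $\rho\,e^{-a}=1/\sigma$, together with $\rho^{-1}=\sigma e^{-a}$: these turn the general term into $\rho^{\,n-1}e^{-na}=\sigma e^{-a}\,\sigma^{-n}$, so the series becomes $\sigma e^{-a}\sum_{n\geq 1}n^k/\sigma^n$. Dividing by $E(\tau)=\sigma e^{-a}/(\sigma e^{-a}-1)$, the factor $\sigma e^{-a}$ cancels and leaves precisely $(\sigma e^{-a}-1)\frac{a^k}{k!}\sum_{n\geq 1}n^k/\sigma^n=y(k)$, which is the desired representation. There is no genuine obstacle here: the argument is a short computation, and the only points demanding a little care are the choice of convention for the geometric law (so that $P(\tau\geq n)=\rho^{\,n-1}$ and $E(\tau)=1/(1-\rho)$ hold simultaneously) and the interchange of expectation and summation, which is immediate by Tonelli's theorem since all terms are non-negative.
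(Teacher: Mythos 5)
Your proposal is correct and follows essentially the same route as the paper: both reduce $E\big(\sum_{n=1}^\tau 1_{\{S_n=k\}}\big)$ to $\sum_{n\geq 1}P(\tau\geq n)P(S_n=k)$ via independence (the paper does this by decomposing on the value of $\tau$ and swapping the double sum, which is equivalent to your $1_{\{\tau\geq n\}}$ rewriting plus Tonelli), then uses $S_n\sim\text{Poisson}(na)$, the geometric tail, and $E(\tau)=\sigma e^{-a}/(\sigma e^{-a}-1)$ to recover $y(k)$. The algebraic collapse via $\rho e^{-a}=1/\sigma$ is identical to the paper's computation.
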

We recall that the geometric distribution of parameter 
$1-(\sigma e^{-a})^{-1}$ is 
$$\forall n\geq 1\qquad
P(\tau\geq n)\,=\,
\Big(
\frac{1}{\sigma e^{-a}}
\Big)^{n-1}
\,.$$
\begin{proof}
We compute the expectation by decomposing the sum according to the value of $\tau$:
$$\displaylines{
E\bigg(\sum_{n=1}^\tau 1_{\{S_n=k\}}\bigg)\,=\,
\sum_{t=1}^\infty
E\bigg(\sum_{n=1}^t 1_{\{S_n=k,\,\tau=t\}}\bigg)\hfil\cr
\hfil\,=\,
\sum_{t=1}^\infty
\sum_{n=1}^t 
P\big(
{S_n=k,\,\tau=t}\big)
\,=\,
\sum_{n=1}^\infty
\sum_{t=n}^\infty
P\big(
{S_n=k,\,\tau=t}\big)\,.
}$$
Now, the variables $S_n$ and $\tau$ are independent.
The distribution of $\tau$ is geometric, the distribution of $S_n$
is Poisson of parameter $na$ (it is a sum of $n$ independent Poisson
distributions of parameter~$a$). Thus the previous sums become
$$
\displaylines{
\sum_{n=1}^\infty
\sum_{t=n}^\infty
P\big(
{S_n=k\big)P\big(\tau=t}\big)
\,=\,
\sum_{n=1}^\infty
P\big(
{S_n=k\big)P\big(\tau\geq n}\big)
\hfil\cr
\hfil
\,=\,
\sum_{n=1}^\infty
e^{-na}\frac{(an)^k}{k!}
\Big(
\frac{1}{\sigma e^{-a}}
\Big)^{n-1}
\,=\,
\sigma
e^{-a}\frac{a^k}{k!}
\sum_{n=1}^\infty
\frac{n^k}{\sigma^n}
\,.
}
$$
Since $E(\tau)=\sigma e^{-a}/(
\sigma e^{-a}-1)$, we recover the quasispecies distribution on the sharp peak landscape.
\end{proof}

\noindent
The previous construction can be extended to a class--dependent fitness as follows.
Suppose that at
time $n$, 
the random walk $S_n$ is in state $i\geq 1$. We toss an independent coin
of parameter $e^af(i)/f(0)$ to decide whether the walk survives another unit of time or not.
More precisely, we define, 
for any $i,n\geq 0$,
$$
P\big(\tau\geq n+1\,\big|\,S_n=i,\,\tau\geq n\big)\,=\,
\begin{cases}
\quad 1\quad&\text{if}\quad i=0\,,\\
\quad 
\displaystyle
e^a\frac{f(i)}{f(0)}
\quad&\text{if}\quad i\geq 1\,.
\end{cases}$$
This defines a random time $\tau$ whose distribution
is a predictable function of the trajectory
of the Poisson walk
$(S_n)_{n\in\mathbb N}$, i.e., the event $\tau=n+1$ depends on $n$ independent
coins whose parameters are deterministic functions of the 
trajectory $S_0,\dots,S_n$ until time $n$. Of course, the definition of~$\tau$ makes sense
only when the following hypothesis holds.

\medskip
\noindent
{\bf Hypothesis ($\mathcal{H}''$).}
We suppose that 
$f(0)\geq e^a f(k)$
for all $k\geq1$.
\begin{proposition}
\label{kco}
Let $f$ be a fitness function satisfying hypothesis $(\cH'')$.
Let $(S_n)_{n\in\N}$ be the Poisson random walk and
let $\tau$ be the random time defined above.
With these choices, the probabilistic representation $(\diamondsuit)$ holds 
for the quasispecies distribution 
associated to~$f$. 
\end{proposition}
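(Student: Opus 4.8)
The plan is to compute the right–hand side of $(\diamondsuit)$ directly and to show that, up to normalisation, it obeys exactly the recurrence that defines the quasispecies in Section~\ref{cdf}. First I would write, by monotone convergence (all the terms are nonnegative),
$$E\bigg(\sum_{n=1}^\tau 1_{\{S_n=k\}}\bigg)\,=\,\sum_{n\geq 1}P\big(S_n=k,\,\tau\geq n\big)\,.$$
The key observation is that $\tau$ is a predictable functional of the walk. Writing $p(i)=e^af(i)/f(0)$ for the survival probability at state $i\geq1$ and $p(0)=1$, the event $\{\tau\geq n\}$ is precisely the event that the survival coins tossed at times $0,\dots,n-1$ all come up heads; since these coins are conditionally independent given the trajectory $(S_m)$, I would obtain
$$P\big(S_n=k,\,\tau\geq n\big)\,=\,E\Big(1_{\{S_n=k\}}\prod_{m=0}^{n-1}p(S_m)\Big)\,=:\,W_n(k)\,,$$
so that $N(k):=E(\sum_{n=1}^\tau 1_{\{S_n=k\}})=\sum_{n\geq1}W_n(k)$. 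Because $S_0=0$ and $p(0)=1$ the factor $m=0$ is harmless; a direct evaluation gives $W_1(k)=P(X_1=k)=e^{-a}a^k/k!$, and summing the geometric series yields $N(0)=\sum_{n\geq1}e^{-na}=1/(e^a-1)$.

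Next I would derive a closed recurrence for $N(k)$ by a one–step (last increment) decomposition. Conditioning $W_n(k)$ on $\mathcal F_{n-1}=\sigma(S_0,\dots,S_{n-1})$, the independent Poisson step $X_n$ contributes $e^{-a}a^{k-S_{n-1}}/(k-S_{n-1})!$ while the product splits off the factor $p(S_{n-1})$, giving for $n\geq2$
$$W_n(k)\,=\,\sum_{h=0}^{k}e^{-a}\frac{a^{k-h}}{(k-h)!}\,p(h)\,W_{n-1}(h)\,.$$
Summing over $n\geq2$ and interchanging the nonnegative sums by Tonelli, the left side becomes $N(k)-W_1(k)$ and the right side $\sum_{h=0}^k e^{-a}\frac{a^{k-h}}{(k-h)!}p(h)N(h)$. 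Isolating the $h=0$ term (where $p(0)=1$) and inserting the value of $N(0)$, the elementary simplification $e^{-a}(1+N(0))=N(0)$ together with the cancellation $e^{-a}\cdot e^{a}=1$ on the terms $h\geq1$ collapses this into
$$N(k)\,=\,N(0)\frac{a^k}{k!}+\sum_{h=1}^{k}\frac{a^{k-h}}{(k-h)!}\frac{f(h)}{f(0)}\,N(h)\,.$$

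Finally I would match this with the recurrence of Section~\ref{cdf}: dividing the defining relation of $z$ by $f(0)$ and peeling off its $h=0$ term gives $z(k)=\frac{a^k}{k!}+\sum_{h=1}^k\frac{a^{k-h}}{(k-h)!}\frac{f(h)}{f(0)}z(h)$, which is the same relation satisfied by $N(k)/N(0)$, with the same value $1$ at $k=0$. Under $(\cH'')$ one has $f(0)\geq e^af(k)>f(k)$, so the diagonal coefficient $1-f(k)/f(0)$ never vanishes and each recurrence determines its sequence uniquely from $k=0$ onward; hence $N(k)=N(0)\,z(k)$ for all $k$. Since $\sum_k 1_{\{S_n=k\}}=1$ gives $\sum_k N(k)=E(\tau)$, summing the identity yields $E(\tau)=N(0)\sum_j z(j)$, and combining with $y(k)=z(k)/\sum_j z(j)$ (which follows from $y(k)=y(0)z(k)$ and $(\cC)$) gives $y(k)=N(k)/E(\tau)$, i.e. $(\diamondsuit)$. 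I expect the main obstacle to be the predictability step producing the product formula for $W_n(k)$; once that is in place the remainder is bookkeeping, the only convergence concern being $E(\tau)<\infty$, which holds precisely because the quasispecies distribution, and hence $\sum_j z(j)$, is assumed finite.
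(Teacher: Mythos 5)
Your proof is correct, but it takes a genuinely different route from the paper's. The paper decomposes $E\big(\sum_{n=1}^\tau 1_{\{S_n=k\}}\big)$ at the first hitting time $T_k$ and, via the Markov property, factors it into (i) the probability of reaching $k$ while still alive and (ii) the expected number of surviving sojourns at $k$, a geometric series equal to $f(0)/(f(0)-f(k))$; factor (i) is then computed by brute force, summing over all non--decreasing trajectories ending at $k$, reindexing by the distinct states visited, and summing geometric series over the holding times, which reproduces, up to a constant, the explicit multinomial formula for $z(k)$ of Section~\ref{cdf}. You instead condition on $\cF_{n-1}$ to obtain the one--step renewal recurrence $W_n(k)=\sum_{h=0}^{k} e^{-a}\frac{a^{k-h}}{(k-h)!}\,p(h)\,W_{n-1}(h)$, sum over $n$, and identify $N(k)/N(0)$ with $z(k)$ by uniqueness of the solution of the defining recurrence (the diagonal coefficient $1-f(k)/f(0)$ being nonzero under $(\cH'')$). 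Your route is lighter and, notably, never uses the explicit formula for $z(k)$ --- only its defining recurrence --- so it is independent of the guess--and--induct derivation of Section~\ref{cdf}; the paper's path--summation, by contrast, yields as a byproduct an independent probabilistic re--derivation of that explicit formula (in line with the paper's aim of illuminating these combinatorial expressions), and the same computational template is reused almost verbatim in the final proposition on up--down coefficients. Two small points you should make explicit: finiteness of each $N(k)$, which is needed before you may rearrange the recurrence, follows since $p\leq 1$ under $(\cH'')$ and $\sum_{n\geq 1} P(S_n=k)=\frac{a^k}{k!}\sum_{n\geq 1} n^k e^{-na}<\infty$; and, as you note, $E(\tau)<\infty$ amounts to summability of the $z(j)$, which is implicit in the assumed existence of the quasispecies distribution (the paper's proof is no more explicit on this point).
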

\begin{proof}
Let $k\geq 1$ and let us set
$$T_k\,=\,\inf\,\{\,n\geq 1:S_n=k\,\}\,.$$
We compute, with the help of a conditioning and the Markov property,
\begin{align*}
E\bigg(&\sum_{n=1}^\tau1_{\{S_n=k\}}\bigg)\,=\,
\sum_{t\geq 1}P\bigg(
\sum_{n=1}^\tau1_{\{S_n=k\}}\geq t\bigg)\cr
&\,=\,
\sum_{t\geq 1}
\sum_{m\geq 1}
P\big(T_k=m,\,S_m=\cdots=S_{m+t-1}=k,\,\tau\geq m+t-1\big)
\cr
&\,=\,
\sum_{t\geq 1}
\sum_{m\geq 1}
P\bigg(
\begin{matrix}
T_k=m\\
\tau\geq m
\end{matrix}
\bigg)\,
P\bigg(
\begin{matrix}
S_{m+1}=\cdots=S_{m+t-1}=k\\
\tau\geq m+t-1
\end{matrix}
\,\bigg|\,
\begin{matrix}
S_m=k\\
\tau\geq m
\end{matrix}
\bigg)\,
\cr
&\,=\,
\bigg(\sum_{m\geq 1}
P\bigg(
\begin{matrix}
T_k=m\\
\tau\geq m
\end{matrix}
\bigg)
\bigg)
\bigg(
\sum_{t\geq 1}
P\bigg(
\begin{matrix}
S_{1}=\cdots=S_{t-1}=k\\
\tau\geq t-1
\end{matrix}
\,\bigg|\,S_0=k\bigg)
\bigg)
\,.
\end{align*}
We deal separately with each sum.
First, we have
$$\displaylines{
\sum_{m\geq 1}
P\bigg(
\begin{matrix}
T_k=m\\
\tau\geq m
\end{matrix}
\bigg)\,=\,
\!\!\!\!\!\!\sum_{\genfrac{}{}{0pt}{1}{m\geq 1,\,
0=s_0\leq \cdots}{\leq s_{m-1}<s_m=k} }\!\!\!\!\!\!
P\big(S_1=s_1,\dots,S_m=s_m,\,\tau \geq m\big)\cr
\,=\,
\!\!\!\!\!\!\sum_{\genfrac{}{}{0pt}{1}{m\geq 1,\,
0=s_0\leq \cdots}{\leq s_{m-1}<s_m=k} }
\kern 7pt
\prod_{j=0}^{m-1}
\bigg(
\max\bigg(1,
\frac{f(s_j)}{f(0)e^{-a}}
\bigg)
e^{-a}
\frac{a^{s_{j+1}-s_{j}}}{(s_{j+1}-s_{j})!}
\bigg)
\,.
}$$
We reindex the sum according to the number~$h$ of distinct integers
in the trajectory 
$s_0\leq \cdots\leq s_{m-1}<s_m=k$ and we obtain
$$\displaylines{
\!\!\!\!\!\!\sum_{\genfrac{}{}{0pt}{1}
{1\leq h\leq k,\,t_0,\dots,t_{h-1}\geq 1}
{0< i_1< \cdots < i_{h-1}<i_h=k} 
}\!\!\!\!\!\!
\kern-7pt
e^{-a(t_0-1)}e^{-a}
\frac{a^{i_1}}{i_1!}
\prod_{j=1}^{h-1}
\Bigg(
\bigg(\frac{f(i_j)}{f(0)e^{-a}}\bigg)^{t_j}
e^{-a(t_j-1)}e^{-a}
\frac{a^{i_{j+1}-i_{j}}}{(i_{j+1}-i_{j})!}
\Bigg)\cr
\,=\,\!\!\!\!\!\!\sum_{\genfrac{}{}{0pt}{1}
{1\leq h\leq k,\,t_0,\dots,t_{h-1}\geq 1}
{0< i_1< \cdots < i_{h-1}<i_h=k} 
}\!\!\!\!\!\!
\kern-7pt
e^{-at_0}
\frac{ a^k }{i_1!}
\prod_{j=1}^{h-1}
\Bigg(
\bigg(\frac{f(i_j)}{f(0)}\bigg)^{t_j}
\frac{1}{(i_{j+1}-i_{j})!}
\Bigg)\cr
\,=\,\!\!\!\!\!\!\sum_{\genfrac{}{}{0pt}{1}
{1\leq h\leq k}
{0< i_1< \cdots < i_{h-1}<i_h=k} 
}\!\!\!\!\!\!
\kern-7pt
\frac{e^{-a}}{1-e^{-a}}
\frac{ a^k }{i_1!}
\prod_{j=1}^{h-1}
\Bigg(
\frac{f(i_j)}{f(0)-f(i_j)}
\frac{1}{(i_{j+1}-i_{j})!}
\Bigg)
\,.
}$$
Second, we have
$$\displaylines{
\sum_{t\geq 1}
P\bigg(
\begin{matrix}
S_{1}=\cdots=S_{t-1}=k\\
\tau\geq t-1
\end{matrix}
\,\bigg|\,S_0=k\bigg)
\,=\,\hfil
\cr
\hfil\sum_{t\geq 1}
\bigg(
e^{-a}
\frac{f(k)}{f(0)e^{-a}}
\bigg)^{t-1}
\,=\,
\frac{f(0)}{f(0)-f(k)}\,.
}$$
Collecting together the previous computations, we obtain
$$
\displaylines{
E\bigg(\sum_{n=1}^\tau1_{\{S_n=k\}}\bigg)\,=\,\hfill\cr
\frac{e^{-a}}{1-e^{-a}}
a^k
\frac{f(0)}{f(k)}
\!\!\!\!\!\!\sum_{\genfrac{}{}{0pt}{1}
{1\leq h\leq k}
{i_0=0< i_1< \cdots < i_{h-1}<i_h=k} 
}\!\!\!\!\!\!
\kern-7pt
\prod_{j=1}^{h}
\Bigg(
\frac{f(i_j)}{f(0)-f(i_j)}
\frac{1}{(i_{j}-i_{j-1})!}
\Bigg)
}
$$
and, up to a multiplicative constant,
we recognize the formula obtained in section~\ref{cdf} for the quantities $z(k)$.
\end{proof}
Our probabilistic construction provides the following intuitive picture for
the structure of the quasispecies. The evolution of the genotype along a lineage
is modelled by a Poisson random walk in the genotype space, starting from
the master sequence. Because of the presence of the master sequence in the population,
the lineages are bound to become extinct, after a random time which depends on their
fitness history. A lineage is more robust if it visits genotypes whose fitnesses
are close
to the fitness of the master sequence. The time $\tau$ models the survival time of a
lineage.
\section{Traps on ascents of random permutations}
We shall finally try to construct a probabilistic model corresponding to
the formula which involves the up--down coefficients.
The most natural random object associated to the up--down coefficients
is a random permutation. The problem is that, to realize the formula
giving $y(k)$, we should draw a random
permutation of $1,\dots,k$, and this for each $k\geq 1$.
Yet we wish to construct a random object whose distribution is given by the 
$y(k)$'s, so the construction should not depend on a fixed value of $k$.
Fortunately, there exists a smart way to embed the up--down coefficients associated
to random finite permutations into
an infinite random sequence. This construction is done by 
Oshanin and Voituriez \cite{OV} and it proceeds as follows.
Let 
$(Y_n)_{n\geq 0}$
be a sequence of i.i.d. random variables, with uniform distribution over $[0,1]$.
We declare that there is an ascent at index $i\geq 1$ if and only if $Y_{i+1}> Y_i$.
Let us fix an integer $k\geq 1$.
The indices of ascents until $k-1$ are exactly the positions
of the ascents of the permutation
$\fs$
of 
$1,\dots,k$ satisfying
$$Y_{\fs^{-1}(1)}<\dots<
Y_{\fs^{-1}(k)}\,.
$$
Since $Y_1,\dots,Y_k$ are i.i.d. uniform over $[0,1]$, then
the permutation
$\fs$ 
is uniformly distributed over
the permutations
of
$1,\dots,k$. Thus, for any fixed $k$, the distribution of the
indices of ascents until $k-1$
has the same distribution than the ascents of a permutation of
$1,\dots,k$ chosen uniformly at random.
Let us continue the construction of a random object associated
to the up--down formula of the quasispecies distribution.
We shall also need a random walk to account for the term $a^k$
in the formula.
Let $\sigma,a$ be such that $\sigma e^{-a}>1$.
Let 
$(X_n)_{n\geq 1}$
be a sequence of i.i.d. Bernoulli random variables such that
$$\forall n\geq 1\quad\forall k\geq 0\qquad
P(X_n=0)\,=\,{1-a}\,,\quad
P(X_n=1)\,=\,a
\,.$$
We consider the associated random walk on the non--negative integers, given by
$S_0=0$ and
$$\forall n\geq 1\qquad S_n\,=\,X_1+\cdots +X_n\,.$$
We define a random time $\tau$ 
as follows.
Suppose that, at
time $n$, 
the random walk $S_n$ is in state $i\geq 1$
and that it has survived $n$ units of time.
At time $n+1$, the walk tries to move to the point $S_{n+1}=S_n+X_n$.
It is killed at time $n+1$ according to the following rule.
We toss an independent coin
to decide whether the walk 
survives another unit of time or not.
In case there is no move, i.e., $X_n=0$, the parameter of the coin is
$f(i)/\big(f(0)(1-a)\big)$.
In case 
the move is a right step, i.e., $X_n=1$, and if in addition
$Y_{i+1}>Y_i$, the parameter of the coin is $f(i)/f(0)$.
In case 
the move is a right step and $Y_{i+1}<Y_i$, the parameter is $1$
and the survival is guaranteed.
More precisely, we define, 
for any $i,j,n\geq 0$,
$$
P\bigg(\tau\geq n+1\,\bigg|\,
\begin{matrix}
S_n=i\\
S_{n+1}=j\\
\tau\geq n
\end{matrix}
\bigg)\,=\,
\begin{cases}
\quad 1\quad&\text{if}\quad i=j=0\,,\\
\quad 
\displaystyle\frac{f(i)}{f(0)(1-a)}
\quad&\text{if}\quad i=j\geq 1\,,\\
\quad 
\displaystyle\frac{f(i)}{f(0)}
\quad&\text{if}\quad j=i+1,\,
Y_{i+1}>Y_i
\,,\\
\quad 
1
\quad&\text{if}\quad j=i+1,\,
Y_{i+1}<Y_i
\,.\\
\end{cases}$$
The definition of~$\tau$ makes sense
only when the following hypothesis holds.

\medskip
\noindent
{\bf Hypothesis ($\mathcal{H}'''$).}
We suppose that 
$f(0)(1-a) \geq f(k)$
for all $k\geq1$.
\begin{proposition}
Let $f$ be a fitness function satisfying hypothesis $(\cH''')$.
Let $(S_n)_{n\in\N}$ and
$\tau$ be the random walk and time defined above.
With these choices, the probabilistic representation $(\diamondsuit)$ holds 
for the quasispecies distribution 
associated to~$f$. 
\end{proposition}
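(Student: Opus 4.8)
The plan is to mirror the proof of Proposition~\ref{kco}, replacing the killed Poisson walk by the present Bernoulli walk and replacing the multinomial weights by the up--down coefficients produced by the ascent structure of $(Y_n)$. First I would fix $k\geq 1$, set $T_k=\inf\{n\geq 1:S_n=k\}$, and write
$$E\Bigg(\sum_{n=1}^\tau 1_{\{S_n=k\}}\Bigg)=\sum_{t\geq 1}P\Bigg(\sum_{n=1}^\tau 1_{\{S_n=k\}}\geq t\Bigg),$$
then decompose according to the value of $T_k$ exactly as in Proposition~\ref{kco}. Using the strong Markov property at time $T_k$ together with the predictability of $\tau$, this factors into a product of two sums: the probability of reaching $k$ while still alive, $\sum_{m\geq 1}P(T_k=m,\tau\geq m)$, and the expected number of consecutive visits to $k$ once it has been reached. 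The dwelling factor is the easy one: while the walk sits at $k$, each extra unit of time needs a null step ($X=0$, probability $1-a$) followed by survival, of combined weight $(1-a)\cdot f(k)/(f(0)(1-a))=f(k)/f(0)$, and summing the resulting geometric series gives $f(0)/(f(0)-f(k))$, just as before.

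The reaching factor is where the up--down coefficients must enter. Because the Bernoulli increments are $0$ or $1$, every trajectory that first hits $k$ passes through each level $0,1,\dots,k$ in turn, dwelling a geometric number of steps at a level before jumping one step to the right. I would sum out the geometric sojourns at the interior levels $1,\dots,k-1$, each contributing a factor $f(0)/(f(0)-f(i))$, while the jump out of level $i$ carries the survival weight prescribed by the coin: a factor $f(i)/f(0)$ when $Y_{i+1}>Y_i$ (an ascent at $i$) and a factor $1$ when $Y_{i+1}<Y_i$ (a descent). Pulling out the $k-1$ factors of $a$ produced by the rightward steps out of levels $1,\dots,k-1$, the reaching factor becomes
$$a^{k-1}\Bigg(\prod_{i=1}^{k-1}\frac{f(0)}{f(0)-f(i)}\Bigg)\,E\Bigg(\prod_{i\in\mathcal{A}}\frac{f(i)}{f(0)}\Bigg),$$
where $\mathcal{A}\subset\{1,\dots,k-1\}$ is the (random) set of ascent positions of $Y_1,\dots,Y_k$.

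The crucial input is the Oshanin--Voituriez observation recalled above: the ascent positions of $Y_1,\dots,Y_k$ are distributed like those of a uniform random permutation of $1,\dots,k$, so that $P(\mathcal{A}=I)=\updown{k}{I}/k!$ for every $I\subset\{1,\dots,k-1\}$. Here the dependence between the ascent indicators at neighbouring positions (they share a common $Y$) is exactly what forbids a naive product and forces the appearance of the up--down coefficients. Taking the expectation over $Y$ turns the product into $\frac{1}{k!}\sum_{I\subset\{1,\dots,k-1\}}\updown{k}{I}\prod_{i\in I}f(i)/f(0)$, which is precisely the combinatorial sum appearing in the up--down formula of Section~\ref{upc}. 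Multiplying the reaching and dwelling factors and comparing with that formula yields, for $k\geq 1$, the clean identity $E\big(\sum_{n=1}^\tau 1_{\{S_n=k\}}\big)=z(k)/a$.

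Finally I would fix the normalization using $\sum_{k\geq 0}\sum_{n=1}^\tau 1_{\{S_n=k\}}=\tau$, which gives $E(\tau)=\sum_{k\geq 0}E\big(\sum_{n=1}^\tau 1_{\{S_n=k\}}\big)$, and combine it with $\sum_{k\geq 0}y(k)=1$ to obtain $(\diamondsuit)$. The main obstacle I anticipate is \emph{not} the interior computation but the bookkeeping at the boundary level $0$. There the survival is forced to be $1$ (the formal parameter $f(0)/(f(0)(1-a))$ exceeds $1$ and is capped), so the geometric sojourn at $0$ and the first rightward step must be summed separately; unlike the Poisson walk, the holding weight at a site no longer coincides with the weight attached to the outgoing step, so the factor of $a$ absorbed at level $0$ has to be tracked carefully. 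The delicate part is to check that the level-$0$ contribution, \emph{including the effect of the initial position $S_0=0$}, combines so that the same constant $1/a$ governs the case $k=0$ (where $z(0)=1$) as well, thereby making $E\big(\sum_{n=1}^\tau 1_{\{S_n=k\}}\big)$ proportional to $z(k)$ uniformly in $k$ and completing the verification of $(\diamondsuit)$.
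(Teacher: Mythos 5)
For $k\geq 1$ your argument is essentially the paper's own proof: the paper conditions on the sequence $(Y_n)$, factorizes at $T_k$ exactly as in Proposition~\ref{kco}, sums the geometric sojourns to obtain the reaching factor
$a^{k-1}\prod_{i=1}^{k-1}\frac{f(0)}{f(0)-f(i)}\bigl(\frac{f(i)}{f(0)}1_{\{Y_{i+1}>Y_i\}}+1_{\{Y_{i+1}<Y_i\}}\bigr)$
and the dwelling factor $\frac{f(0)}{f(0)-f(k)}$, and only then averages over $(Y_n)$ using $P(\text{the ascent set is } I)=\frac{1}{k!}\updown{k}{I}$, arriving at the same identity $E\bigl(\sum_{n=1}^\tau 1_{\{S_n=k\}}\bigr)=z(k)/a$ that you state. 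Your only deviation is to apply the strong Markov property under the unconditional law and defer the $Y$--average to the reaching factor; this is legitimate, but only because the dwelling factor happens to involve none of the $Y$'s --- since the killing probabilities are functions of the $Y$'s, it is cleaner (and is what the paper does, via $\wE$ and $\wP$) to condition on $(Y_n)$ throughout.

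The step you flag as the delicate one is not merely delicate: with the counting convention of $(\diamondsuit)$ it actually fails, and your parenthetical remark about ``the effect of the initial position $S_0=0$'' is precisely the repair. Survival at level $0$ is certain, so $E\bigl(\sum_{n=1}^\tau 1_{\{S_n=0\}}\bigr)=\sum_{n\geq 1}(1-a)^n=\frac{1-a}{a}$, whereas uniformity of your constant would require $z(0)/a=1/a$; the missing $1$ is exactly the term $n=0$. (In Proposition~\ref{kco} this problem does not arise: the count at $0$ there equals $\frac{e^{-a}}{1-e^{-a}}$, which is the same constant as for $k\geq 1$, because the Poisson holding weight $e^{-a}$ at level $0$ reappears as a factor of every outgoing jump, while here the holding weight $1-a$ and the jump weight $a$ differ.) What this construction actually satisfies, uniformly in $k\geq 0$, is $E\bigl(\sum_{n=0}^\tau 1_{\{S_n=k\}}\bigr)=z(k)/a$, that is, $(\diamondsuit)$ with the sum started at $n=0$ and the normalization $E(\tau+1)$ in place of $E(\tau)$. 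The paper's proof never confronts this, since like yours it computes only for $k\geq 1$ and concludes ``up to a multiplicative constant''; on this boundary point your proposal is more careful than the paper, but the verification you postpone cannot be completed as stated --- the conclusion requires amending the counting convention at $n=0$, not a finer computation.
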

\begin{proof}
The expectation in the formula is taken with respect to the variables
$X_n$ and $Y_n$. To prove the formula, we shall first fix the variables
$Y_n$ and take the expectation with respect to the variables $X_n$.
Let us denote by $\wE$ and $\wP$ the conditional expectation and probability
knowing the variables $(Y_n)_{n\geq 1}$.
Let $k\geq 1$. We set
$$T_k\,=\,\inf\,\{\,n\geq 1:S_n=k\,\}\,.$$
Exactly as in the proof of proposition~\ref{kco}, we have
$$\displaylines{
\wE\bigg(\sum_{n=1}^\tau1_{\{S_n=k\}}\bigg)\,=\,\hfill\cr
\hfil
\bigg(\sum_{m\geq 1}
\wP\bigg(
\begin{matrix}
T_k=m\\
\tau\geq m
\end{matrix}
\bigg)
\bigg)
\bigg(
\sum_{t\geq 1}
\wP\bigg(
\begin{matrix}
S_{1}=\cdots=S_{t-1}=k\\
\tau\geq t-1
\end{matrix}
\,\bigg|\,S_0=k\bigg)
\bigg)
\,.
}$$
We deal separately with each sum.
First, we have
$$\displaylines{
\sum_{m\geq 1}
\wP\bigg(
\begin{matrix}
T_k=m\\
\tau\geq m
\end{matrix}
\bigg)\,=\,
\sum_{t_0,\dots, t_{k-1}\geq 1}
\Bigg(
\big({1-a}\big)^{t_0-1}
{a}\hfill
\cr
\prod_{i=1}^{k-1}
\Bigg(
\big({1-a}\big)^{t_i-1}
\bigg(
\displaystyle\frac{f(i)}{f(0)(1-a)}
\bigg)^{t_i-1}
a
\bigg(
\displaystyle\frac{f(i)}{f(0)}1_{Y_{i+1}>Y_{i}}+1_{Y_{i+1}<Y_{i}}\bigg)
\Bigg)
\Bigg)\cr
\,=\,
a^{k-1}
\prod_{i=1}^{k-1}
\Bigg(
\frac{f(0)}{f(0)-f(i)}
\bigg(
\displaystyle\frac{f(i)}{f(0)}1_{Y_{i+1}>Y_{i}}+1_{Y_{i+1}<Y_{i}}\bigg)
\Bigg)
\,.
}$$
Second, we have
$$\displaylines{
\sum_{t\geq 1}
\wP\bigg(
\begin{matrix}
S_{1}=\cdots=S_{t-1}=k\\
\tau\geq t-1
\end{matrix}
\,\bigg|\,S_0=k\bigg)
\,=\,\hfil
\cr
\hfil\sum_{t\geq 1}
\big(
{1-a}
\big)^{t-1}
\bigg(
\displaystyle\frac{f(k)}{f(0)(1-a)}
\bigg)^{t-1}
\,=\,
\frac{f(0)}{f(0)-f(k)}\,.
}$$
Collecting together the previous computations, we obtain
$$
\displaylines{
\wE\bigg(\sum_{n=1}^\tau1_{\{S_n=k\}}\bigg)\,=\,\hfill\cr
a^{k-1}
\Bigg(
\prod_{i=1}^{k}
\frac{f(0)}{f(0)-f(i)}
\Bigg)
\Bigg(
\prod_{j=1}^{k}
\bigg(
\displaystyle\frac{f(j)}{f(0)}1_{Y_{j+1}>Y_{j}}+1_{Y_{j+1}<Y_{j}}\bigg)
\Bigg)\,.
}
$$
It remains to average over the 
variables $(Y_n)_{n\geq 1}$. We have
$$
\displaylines{
E\Bigg(\prod_{j=1}^{k}
\bigg(
\displaystyle\frac{f(j)}{f(0)}1_{Y_{j+1}>Y_{j}}+1_{Y_{j+1}<Y_{j}}\bigg)
\Bigg)\,=\,\hfill\cr
\sum_{I\subset\{\,1,\dots,k-1\,\}}
\bigg(
\prod_{j\in I}
\displaystyle\frac{f(j)}{f(0)}
\bigg)P\big(\text{the ascents of $\fs^{-1}$ are $I$}\big)\,.
}$$
The last probability is equal to 
$\frac{1}{k!}\updown{k}{I}$.
We conclude that
$$
\displaylines{
\wE\bigg(\sum_{n=1}^\tau1_{\{S_n=k\}}\bigg)\,=\,
\frac{
a^{k-1}
}{k!}
\Bigg(
\prod_{i=1}^{k}
\frac{f(0)}{f(0)-f(i)}
\Bigg)
\sum_{I\subset\{\,1,\dots,k-1\,\}}
\updown{k}{I}
\prod_{j\in I}
\displaystyle\frac{f(j)}{f(0)}
\,.
}$$
Up to a multiplicative constant,
we recognize the formula obtained in section~\ref{upc} for the quantities $z(k)$.
\end{proof}
This last probabilistic construction matches the up--down formula for the quasispecies,
yet it is still mysterious. We have no convincing explanation so far for the presence
of the up--down coefficients. 
\bigskip

\noindent
{\bf Acknowledgements:} This work was completed during a visit to the mathematics department of the University of Padova. 
We warmly thank Carlo Mariconda and Paolo Dai Pra for their hospitality.
\bibliographystyle{plain}
\bibliography{sga}

\begin{thebibliography}{1}

\bibitem{CA}
L.~Carlitz.
\newblock Permutations with prescribed pattern.
\newblock {\em Math. Nachr.}, 58:31--53, 1973.

\bibitem{CD}
Rapha{\"e}l Cerf and Joseba Dalmau.
\newblock Quasispecies on class-dependent fitness landscapes.
\newblock {\em Bulletin of Mathematical Biology}, 78(6):1238--1258, 2016.

\bibitem{EI1}
Manfred Eigen.
\newblock Self-organization of matter and the evolution of biological
  macromolecules.
\newblock {\em Naturwissenschaften}, 58(10):465--523, 1971.

\bibitem{GKP}
Ronald~L. Graham, Donald~E. Knuth, and Oren Patashnik.
\newblock {\em Concrete mathematics}.
\newblock Addison-Wesley Publishing Company, Reading, MA, second edition, 1994.
\newblock A foundation for computer science.

\bibitem{MT}
Carlo Mariconda and Alberto Tonolo.
\newblock {\em Discrete calculus --Methods for counting--}.
\newblock Springer, 2016 (to appear).

\bibitem{OV}
G.~Oshanin and R.~Voituriez.
\newblock Random walk generated by random permutations of
  {$\{1,2,3,\dots,n+1\}$}.
\newblock {\em J. Phys. A}, 37(24):6221--6241, 2004.

\bibitem{SEN}
E.~Seneta.
\newblock {\em Nonnegative matrices and {M}arkov chains}.
\newblock Springer Series in Statistics. Springer-Verlag, New York, second
  edition, 1981.

\bibitem{Sloane}
N.~J.~A. Sloane.
\newblock The on-line encyclopedia of integer sequences.
\newblock {\em Ann. Math. Inform.}, 41:219--234, 2013.

\end{thebibliography}
 \thispagestyle{empty}

\end{document}